\newcommand{\eye}{\mbox{$\mbox{1}\!\mbox{l}\;$}}
\renewcommand{\vec}[1]{\boldsymbol{#1}}
\newcommand{\matr}[1]{\boldsymbol{#1}}
\newtheorem{lemma}{Lemma}
\newtheorem{cor}{Corollary}
\begin{document}

\title{Dynamic stability of electric power grids: Tracking the interplay of the network structure, transmission losses and voltage dynamics}

\author{Philipp C. B\"ottcher~\href{https://orcid.org/0000-0002-3240-0442}{\includegraphics[width=3.2mm]{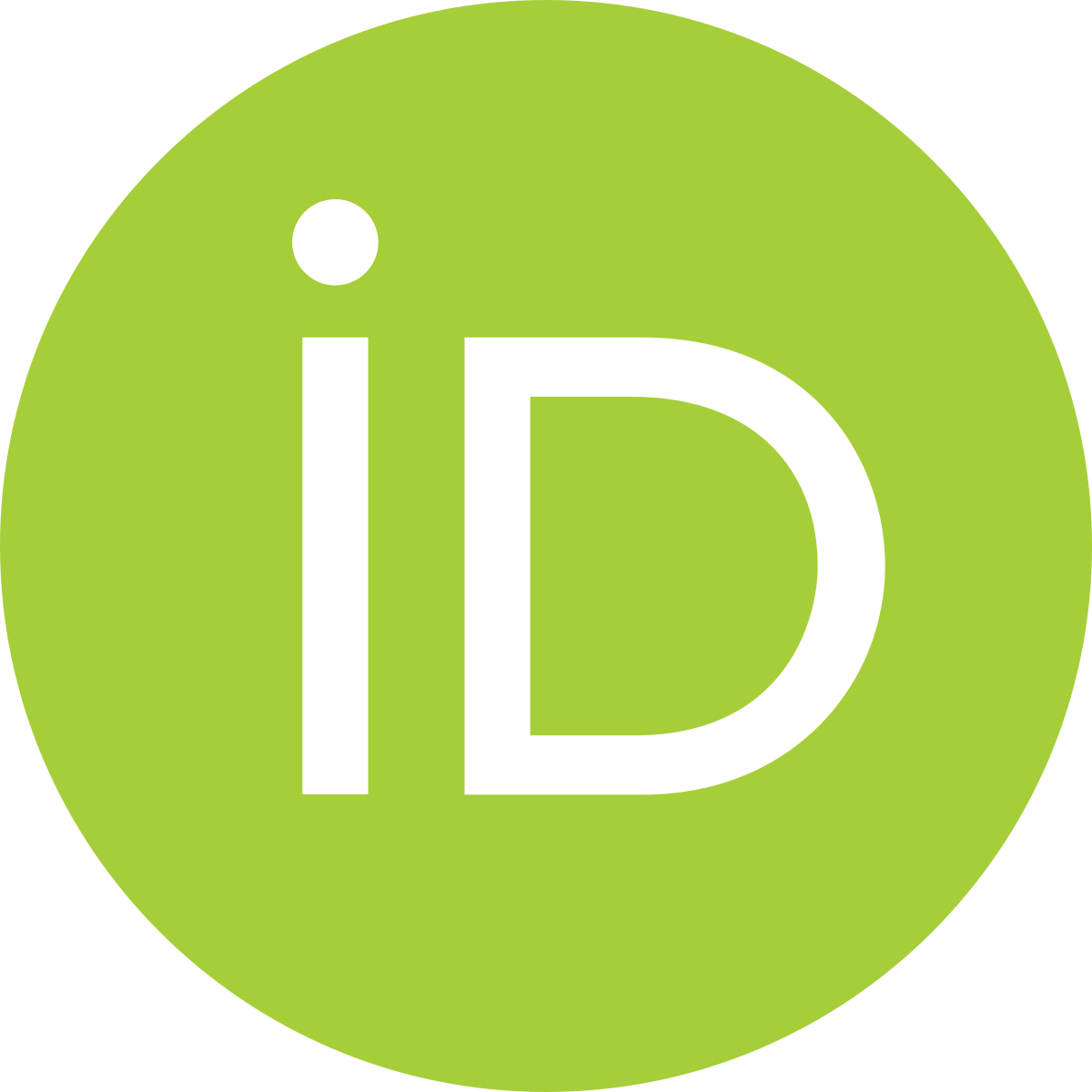}}}
\email[Electronic mail: ]{p.boettcher@fz-juelich.de}
\affiliation{Forschungszentrum J\"ulich, Institute for Energy and Climate Research -\\
Systems Analysis and Technology Evaluation (IEK-STE), 52428 J\"ulich, Germany\looseness=-1}
\affiliation{German Aerospace Center (DLR), Institute of Networked Energy Systems, Oldenburg, Germany\looseness=-1}

\author{Dirk~Witthaut~\href{https://orcid.org/0000-0002-3623-5341}{\includegraphics[width=3.2mm]{orcid.png}}}
\email[Electronic mail: ]{d.witthaut@fz-juelich.de}
\affiliation{Forschungszentrum J\"ulich, Institute for Energy and Climate Research -\\
Systems Analysis and Technology Evaluation (IEK-STE), 52428 J\"ulich, Germany\looseness=-1}
\affiliation{Institute for Theoretical Physics, University of Cologne, 50937 K\"oln, Germany\looseness=-1}

\author{Leonardo~Rydin~Gorj\~ao~\href{https://orcid.org/0000-0001-5513-0580}{\includegraphics[width=3.2mm]{orcid.png}}}
\email[Electronic mail: ]{leonardo.rydin@gmail.com}
\affiliation{Forschungszentrum J\"ulich, Institute for Energy and Climate Research -\\
Systems Analysis and Technology Evaluation (IEK-STE), 52428 J\"ulich, Germany\looseness=-1}
\affiliation{German Aerospace Center (DLR), Institute of Networked Energy Systems, Oldenburg, Germany\looseness=-1}
\affiliation{Institute for Theoretical Physics, University of Cologne, 50937 K\"oln, Germany\looseness=-1}

\begin{abstract}
Dynamic stability is imperative for the operation of the electric power system.
This article provides analytical results and effective stability criteria focusing on the interplay of network structures and the local dynamics of synchronous machines.
The results are based on an extensive linear stability analysis of the third-order model for synchronous machines, comprising the classical power-swing equations and the voltage dynamics.
The article explicitly covers the impact of Ohmic losses in a linear approximation in power grids, which are often neglected in analytical studies.
Necessary and sufficient stability conditions are formulated, and different routes to instability are analysed, yielding concrete mathematical criteria applicable to all scales of power grids, from transmission to distribution grids, as well as microgrids.
A subsequent numerical study of the criteria is presented, without and with resistive terms, to test how tight the derived analytical results are.
\end{abstract}

\maketitle

\begin{quotation}
The secure supply of electric power relies on the stable, coordinated operation of thousands of electric machines connected via the electric power grid. 
At the transmission grid level, machines run synchronously with fixed voltage magnitudes and stationary relative phase angles defining a stationary state. 
The ongoing introduction of renewable power systems poses several challenges to the stability of the system, as situations with highly loaded lines and temporal fluctuations increase considerably. 
This trend takes place in both the transmission grid at high voltages, as well as in distribution grids and microgrids at medium and low voltages. 
This article contributes to the understanding of dynamical stability of electric power systems and provide a detailed analysis of the third-order model for synchronous generators, which includes the transient dynamics of voltage magnitudes.
Special emphasis is laid on the impact of Ohmic losses in the transmission of power, which are often neglected in analytical treatments of power system stability. 
The analytical results thus find applicability on all size scales of power grids, from transmission grids to isolated microgrids, for openly tackling systems with losses in a rigorous analytical manner.
Furthermore, the results are independent of the network construction and entail explicit criteria for the connectivity of the power grid and the physical requirements needed to ensure stability in the presence of resistive terms.
\end{quotation}

\section{Introduction}\vspace*{-2.5mm}
The unwavering operation of the electric power systems is vital to our daily life and the continued function of modern societies as a whole.
Thus, an improved understanding of the electrical system's dynamic properties is especially relevant at present, as more renewable energies enter the electric power-grid systems~\cite{Sims11,Milano2018}.
One of the key elements at play is the relative reduction of inertial mass in power systems due to the penetration of renewable generation, which can lead to large dynamic responses to disturbances~\cite{Schaefer2018,Boettcher2020,Markovic2021}.
An expected higher grid load and stronger fluctuating generation by wind and solar resources may further threaten the dynamic stability~\cite{Farmer2021a,Farmer2021b,RydinGorjao2021}. 

Conventional power generation typically involves large rotating masses that offer stabilising inertia~\cite{Machowski2020}.
These power grids can be of various scales, spanning entire continents to single islands.
Recently, the concept of microgrids has emerged~\cite{Lasseter2004,Rocabert2012,Guo2017}: partially independent power grids in smaller environments that are coupled to a main power grid.
These power grids operate at lower voltages than conventional transmission grids and are capable of producing their own power, consequently working partially independently from an overlying power grid~\cite{Tayab2017,Doerfler2019}.
Microgrids embedded in a power grids are still ruled by a common understanding of fixed nominal frequency, e.g., $50$~Hz in Europe, among many other stability criteria~\cite{Rohden2014,FarokhianFiruzi2019}.

Analytical approaches to stability in power-grid systems are a difficult task and generally rely on model simplifications to keep the problem tractable~\cite{Pagnier2019,Tyloo2019,Tyloo2019b,Hellmann2020,Thuemler2021}.
The most common simplification to make stability problems mathematically tractable is the assumption of having lossless systems~\cite{He2019,Qiu2020}.
Various such studies with complex dynamical models exist, cf. Schiffer \textit{et al.} \cite{Schiffer2014,Schiffer2015,Schiffer2017} and D{\"o}rfler \textit{et al.} \cite{Doerfler2010,Doerfler2013}, yet results are scarce for extended networks including resistive terms, given the difficulty of tackling dissipative systems mathematically.
The problem of losses in power-grid systems is often tackled using extensive numerical simulations~\cite{Krause2002,Sauer2017}.

This article puts forth a set of mathematical stability criteria for power grids based on the third-order model for synchronous generators~\cite{Schmietendorf2014,Auer2016,Schmietendorf2016,Pierre2019,Wu2020,Suchithra2021}.
The criteria can be employed for various scales of power grids -- for both transmission and distribution grids -- evidencing the limitations entailed by the existence of resistive terms on the operability of power-grid systems.
In particular, the article undertakes the task of intertwining results for graph theory with the characteristics of the power-grid construction and their physical properties~\cite{Huang2020,Wenting2021}, extending a previous article on lossless power grids~\cite{Sharafutdinov2018}.

The article is structured in the following manner: Section~\ref{sec:II} introduces the basic dynamical model studied in this paper.
In this work, we present an analysis of the third-order model, comprising transient voltage dynamics and considering extended grids with complex topology and resistive losses.
Section~\ref{sec:III} tackles the linear stability analysis of the equations of motion, a reduction of the problem to a matrix formulation and develops a mathematical apparatus to unveil sufficient and necessary criteria for stability in a general sense.
Section~\ref{sec:IV} introduces the two main lemmata of the article from which various stability criteria are derived.
Lemma~\ref{lemma:Schur-lossless} covers solely lossless grids and Lemma~\ref{lemma:Schur-lossy} extends the results to the case of lossy transmission up to leading order in the losses in the system, which is drawn from perturbation theory.
In Section~\ref{sec:V} the developed concepts are utilised to derive analytical stability conditions for both lossless and lossy systems, presenting criteria for stability not only for the power-angle and the voltage dynamics, but also for a mixed type of instabilities.
These results also represent a direct link with graph-theoretical measures.
Section~\ref{sec:VI} comprises a set of numerical studies on model systems to check how tight tight the derived bounds are.
The conclusions follow subsequently in Section~\ref{sec:VII}.

\section{Modelling scale-independent network-based power grids}\label{sec:II}\vspace*{-2.5mm}
\subsection{Third-order model for synchronous generators}\vspace*{-2.5mm}

The third-order model for synchronous machines, denoted as well as a one- or $q$-axis model, describes the transient dynamics of coupled synchronous machines \cite{Krause2002,Sauer2017,Machowski2020}.
It embodies the power- or rotor angle $\delta(t)$, relative to the power-grid reference frame, the angular frequency $\omega(t) = \dot{\delta}(t)$, in a co-rotating reference frame rotating with the reference frequency $\Omega$, and the transient voltage $E_q(t)$, in the $q$-direction of a co-rotating frame of reference of each machine in the system.
It excludes sub-transient effects, i.e., higher-order effects, and assumes that the transient voltage $E_d$ in the $d$-direction of the co-rotating frame vanishes.

Sub-transient effects play a small role, especially in the case of studying power grids in the vicinity of the steady state~\cite{Weckesser2013}.
The truncation of the transient voltage $E_d$ in the $d$-axis is imposed out of necessity to have an analytically tractable model.
Still, the resulting dynamical system is rather complex such that analytical results are scarce and mostly restricted to lossless power grids.
Hence, the scope of the analysis here is two-fold: To present the details of tackling rotor-angle and voltage stability, whilst not shunning away from complex network topologies and considering Ohmic losses explicitly.

The equations of motion for one generator are given by~\cite{Machowski2020}
\begin{equation}
\begin{aligned}\label{eq:3rdorder-pre}
   \dot \delta &= \omega, \\
   M \dot \omega &=   - D \omega + P^{\text{m}} - P^{\text{el}}, \\
   T \dot E &= E^{f} - E + (X - X') I,
  \end{aligned}
\end{equation} 
where henceforth $E \equiv E_q$ denotes solely the voltage along the $q$-axis, and the dot the differentiation with respect to time.
Furthermore, $P^{\mathrm{}{m}}$ denotes the effective mechanical input power of the machine, $E^{f}$ the internal voltage or field flux, and $P^{\mathrm{el}}$ denotes the electrical power out-flow.
The parameters $M$ and $D$ are the inertia and damping of the mechanical motion and $T$ the relaxation time of the transient voltage dynamics.
The voltage dynamics further depend on the difference of the static reactance $X$ and transient reactance $X'$ along the $d$-axis, where $X - X' > 0$ in general, and the current $I$ along $d$-axis. 

The active electrical power $P^{\mathrm{el}}_j$ exchanged with the power grid, and the current $I_{j}$ at the $j$-th machine read~\cite{Schmietendorf2014}
\begin{equation}
\begin{aligned}
    P^{\mathrm{el}}_j &= \sum_{\ell=1}^N\! E_{j} E_{\ell} \left[B_{j,\ell} \sin(\delta_j\!-\!\delta_\ell ) \!+\! G_{j,\ell} \cos(\delta_j\!-\!\delta_\ell )\right], \\
    I_{j}\! &=\! \sum_{\ell=1}^N\! E_{\ell} \left[B_{j,\ell} \cos(\delta_j\!-\!\delta_\ell )\! -\! G_{j,\ell} \sin(\delta_j\!-\!\delta_\ell )\right],
\end{aligned}
\end{equation}
where the $E_{j}$ and $\delta_j$ are the transient voltage and the rotor angle of the $j$-th machine, respectively.
The parameters $G_{j,\ell}$ and $B_{j,\ell}$ denote the real and imaginary parts of the nodal admittance matrix and encode the network structure.
Generally, $B_{j,\ell}>0$ and $G_{j,\ell}<0$ for all $j\neq \ell$.
This article is especially concerned with the role of Ohmic losses, which are described by the real parts of the nodal admittance matrix $G_{j,\ell}$.
All quantities are usually made dimensionless using appropriate scaled units referred to as the `pu system' or `per unit system' \cite{Machowski2020}.

Load nodes are typically described by constant impedances to the ground.
These passive nodes can be eliminated from the network equations via Kron reduction such that only generator nodes have to be considered explicit~\cite{Kron1939,Doerfler2011b}.
The remaining nodes are then connected by an effective network which differs considerably from the physical one.
For instance, the reduced network is typically fully connected.

The equations of motion~\eqref{eq:3rdorder-pre} for the $j$-th synchronous machine, in a system with $N$ machines, take the form~\cite{Schmietendorf2014,Jinpeng2016,Schmietendorf2016}
\begin{equation}
\begin{aligned}\label{eq:3rdorder}
   \dot \delta_j\!&=\!\omega_j,\\
 M_j \dot \omega_j\!&=\!P^{m}_j\!-\!D_j \omega_j\!+\sum_{\ell=1}^N\! E_j E_\ell \left[B_{j,\ell} \sin(\delta_\ell\!-\!\delta_j )\right. \\
                  & \qquad\qquad\qquad\qquad\qquad\quad   \left.+ G_{j,\ell} \cos(\delta_\ell\!-\!\delta_j )\right], \\   
   T_j \dot E_j\!&=\!E_j^{f}\!-\!E_j\!+\!(X_j\!-\!X'_j)\!\sum_{\ell=1}^N\!E_{\ell} \left[B_{j,\ell} \cos(\delta_j\!-\!\delta_\ell )\right. \\
                  & \qquad\qquad\qquad\qquad\qquad\quad   - \left. G_{j,\ell} \sin(\delta_j\!-\!\delta_\ell )\right].
\end{aligned}
\end{equation}


Most analytical studies so far neglected, under reasonable assumptions, the line losses of the power-grid structure.
The terms proportional to $G_{j,\ell}$ are assumed negligible in comparison to the terms proportional to $ B_{j,\ell}$.
Such arguments are reasonable for the high-voltage transmission grid, but are mostly unfounded for distribution and microgrids, where the resistance and inductance of transmission lines are comparable~\cite{Rocabert2012}.
In addition, losses become more considerable in magnitude when the transmitted power is large. 
This manuscript puts forth a study of the system in full form, not discarding the interplay of susceptance and conductance, i.e., fully integrating losses, by taking a perturbation theory approach to the losses.

\section{Equilibria and linear stability analysis}\label{sec:III}\vspace*{-2.5mm}
\subsection{Equilibrium states of power grid operation}\vspace*{-2.5mm}

The stationary operation of the voltages and power-angles of the machines comprising the power grid is the cornerstone of operability of power grids.
Constant voltages and perfect phase-locking, i.e., a point in configuration space where all $E_j$, $\omega_j$ and $\delta_j - \delta_\ell$ are constant in time, is the desired state.
The latter restriction requires that all machines rotate at the same frequency $\delta_j(t) = \Omega t + \delta_j^\circ$ for all $j=1,\ldots,N$, leading to the conditions
\begin{equation}\label{eq:equi_point}
   \dot \omega_j = \dot E_j = 0, ~ \dot \delta_j =  \Omega, \quad \forall j=1,\ldots,N.
\end{equation}
In dynamical system terms, this is a stable limit cycle of the system, also known as an isolated closed orbit.
From a physical perspective, all points on the limit cycle are equivalent as they only differ by a global phase $\alpha$ which is irrelevant for the operation of the power grid.
One can thus choose one of these points as a representative of the limit cycle and refer to it as an `equilibrium manifold'.
The superscript $\cdot^\circ$ is used to denote the values of the rotor-phase angle, frequency, and voltage in this equilibrium manifold.
Likewise, perturbations along the limit cycle do not affect the power grid operation and can thus be excluded from the stability analysis.

For the third-order model~\eqref{eq:3rdorder} an equilibrium manifold of the power grid is given by the nonlinear algebraic equations
\begin{equation}\label{eq:3rd-fixed}
\begin{aligned}
   \Omega &= \omega_j^{\circ},  \\
   0 &=  P^{m}_j -D_j \Omega  +\sum_{\ell=1}^N E_j^{\circ} E_\ell^{\circ} \left[B_{j,\ell} \sin(\delta_\ell ^\circ- \delta_j^\circ )\right. \\
     & \qquad\qquad\qquad\qquad\qquad\qquad \left.+ G_{j,\ell} \cos(\delta_\ell^\circ- \delta_j^\circ )\right],\\
   0 &= E_j^{f}\!-\!E_j^\circ\!+\!(X_j\!-\!X'_j) \sum_{\ell=1}^N E^\circ_{\ell} \left[B_{j,\ell} \cos(\delta^\circ_j- \delta^\circ_\ell )\right. \\
     & \qquad\qquad\qquad\qquad\qquad\qquad - \left. G_{j,\ell} \sin(\delta^\circ_j- \delta^\circ_\ell )\right],
\end{aligned}    
\end{equation}
noting that many equilibria -- stable and unstable -- can exist in networks with sufficiently complex topology, although this does not preclude performing a linear stability analysis~\cite{Korsak1972,Delabays2016,Manik2017,Jafarpour2019,Hellmann2020}.

\subsection{Linear stability analysis}\vspace*{-2.5mm}
A central tool of dynamical systems study is linear or small-signal stability analysis~\cite{Strogatz2015}.
The local stability properties of an equilibrium $(\delta_j^{\circ},\omega_j^{\circ},E_j^{\circ})$, i.e., stability with respect to small perturbations around an equilibrium point, can be obtained by linearising the equations of motion of the system~\eqref{eq:3rdorder}.

To perform a linear stability analysis of~\eqref{eq:3rdorder}, one introduces the perturbations $\xi_j$, $\nu_j$ and $\epsilon_j$, such that
\begin{equation}
   \delta_j(t) = \delta_j^{\circ} + \xi_j(t),  \;  \omega_j(t) =  \omega_j^{\circ} + \nu_j(t), \; E_j(t) = E_j^{\circ} + \epsilon_j(t).
\end{equation}
The rotor-angle perturbation $\xi_j$, the frequency perturbation $\nu_j$, and the voltage perturbation $\epsilon_j$ can, individually or collectively, decay to zero or grow indefinitely.
This on the other hand does not exclude the existence of other attractors in state space but the linearisation around a fixed point will only preserve the attractor around the given fixed point.
The system, around the equilibrium $(\delta_j^{\circ},\omega_j^{\circ},E_j^{\circ})$, is either stable or unstable, correspondingly.
This is also known as `exponential stability' or `small-signal stability'.

Applying the linearisation of~\eqref{eq:3rdorder} whilst simultaneously gauging onto a rotating frame of reference, with rotation frequency $\Omega$ as in~\eqref{eq:equi_point}, yields
\begin{align}\label{eq:linstab1}
    \dot \xi_j &= \nu_j, \nonumber \\
    M_j \dot \nu_j &=\!-D_j \nu_j\!-\!\sum_{\ell=1}^N (\Lambda_{j,\ell}\!+\!\Gamma_{j,\ell}) \xi_\ell  + \sum_{\ell=1}^N (A_{\ell,j}\!+\!C_{j,\ell}) \epsilon_\ell,\nonumber \\ 
    T_j  \dot \epsilon_j   &=\!- \epsilon_j + (X_j\!-\!X'_j)\!\sum_{\ell=1}^N(H_{j,\ell} + K_{j,\ell}) \epsilon_\ell \\
     & \qquad \qquad  \qquad+(X_j\!-\!X'_j)\! \sum_{\ell=1}^N (A_{j,\ell}\!+\!F_{j,\ell}) \xi_\ell, \nonumber
\end{align}
where the matrices $\vec \Lambda, \vec \Gamma, \vec A, \vec C, \vec F, \vec H, \vec K \in \mathbb{R}^{N \times N}$ (written in component form above) are given by
\begin{equation}
\begin{aligned}\label{eq:def_matrices}
    \Lambda_{j,\ell} &= 
    \left\{ \begin{array}{l l}
        - E_j^{\circ} E_\ell^{\circ} B_{j,\ell} \cos(\delta_\ell ^{\circ}- \delta_j^{\circ} ) \; & \mbox{for} \, j \neq \ell \\
        \sum_{k\neq j}  E_j^{\circ} E_k^{\circ} B_{j,k} \cos(\delta_k ^{\circ}- \delta_j^{\circ} ) \; & \mbox{for} \, j = \ell \\
     \end{array} \right.  \\
    \Gamma_{j,\ell} &= 
    \left\{ \begin{array}{l l}
        - E_j^{\circ} E_\ell^{\circ} G_{j,\ell} \sin(\delta_\ell^{\circ} - \delta_j^{\circ}) \; & \mbox{for} \, j \neq \ell \\
        \sum_{k\neq j}  E_j^{\circ} E_k^{\circ} G_{j,k} \sin(\delta_k^{\circ} - \delta_j^{\circ}) \; & \mbox{for} \, j = \ell \\
     \end{array} \right. \\
     A_{j,\ell} &= 
    \left\{ \begin{array}{l l}
        - E_\ell^{\circ} B_{j,\ell} \sin(\delta_\ell ^{\circ}- \delta_j^{\circ} ) \; & \hspace*{8mm} \mbox{for} \, j \neq \ell \\
        \sum_k E_k^{\circ} B_{j,k} \sin(\delta_k ^{\circ}- \delta_j^{\circ} ) \; & \hspace*{8mm} \mbox{for} \, j = \ell \\
     \end{array} \right.     \\
     C_{j,\ell} &= 
    \left\{ \begin{array}{l l}
        E_j^{\circ} G_{j,\ell} \cos(\delta_\ell^{\circ} - \delta_j^{\circ}) \; & \hspace*{7.5mm} \mbox{for} \, j \neq \ell \\
        \sum_k E_k^{\circ} G_{j,k} \cos(\delta_k^{\circ} - \delta_j^{\circ}) \; & \hspace*{7.5mm} \mbox{for} \, j = \ell \\
     \end{array} \right.  \\
     F_{j,\ell} &= 
    \left\{ \begin{array}{l l}
        -E_\ell^{\circ} G_{j,\ell} \cos(\delta_\ell^{\circ} - \delta_j^{\circ}) \; & \hspace*{4mm} \mbox{for} \, j \neq \ell \\
        \sum_{k\neq j} E_k^{\circ} G_{j,k} \cos(\delta_k^{\circ} - \delta_j^{\circ}) \; & \hspace*{4mm} \mbox{for} \, j = \ell \\
     \end{array} \right.  \\
    H_{j,\ell} &=  B_{j,\ell} \cos(\delta_\ell ^{\circ}- \delta_j^{\circ} ), \\
    K_{j,\ell} &= - G_{j,\ell} \sin(\delta_\ell ^{\circ}- \delta_j^{\circ} ). \\
\end{aligned}
\end{equation}
The diagonal matrices $\vec M$, $\vec D$, $\vec X$, and $\vec T$ (all in $\mathbb{R}^{N \times N}$) comprise the elements $M_j$, $D_j$, $(X_j-X'_j)$, and $T_j$ for $j=1,\ldots,N$, respectively.
All these diagonal matrices are positive definite.

The linearised system~\eqref{eq:linstab1} takes a compact matrix formulation, where the linearised terms are elegantly combined into the Jacobian matrix $\vec{J}\in\mathbb{R}^{3N\times3N}$, by defining the vectors $\vec \xi = (\xi_1,\ldots,\xi_N)^\top$, 
$\vec \nu = (\nu_1,\ldots,\nu_N)^\top$, and $\vec \epsilon = (\epsilon_1,\ldots,\epsilon_N)^\top$, each in $\mathbb{R}^N$, with the superscript $\cdot^\top$ denoting the transpose of a matrix or vector. 
The linearised equations can be written as
\begin{equation}
\frac{\mathrm{d}}{\mathrm{d} t}\!\begin{pmatrix}
\vec{\xi}\\ \vec{\nu} \\ \vec{\epsilon}
\end{pmatrix}=\vec{J}\begin{pmatrix}
\vec{\xi}\\ \vec{\nu} \\ \vec{\epsilon} 
\end{pmatrix},
\end{equation}
with 
\begin{equation}\label{eq:jacobian}
\vec{J} \!=\!\!\begin{pmatrix} \vec 0 & \eye & \vec 0 \\
- \vec M^{-1} ( \vec \Lambda \!+\! \vec \Gamma) &\! - \vec M ^{-1}\! \vec D & \vec{M}^{-1}( \vec A^\top\!\!\! +\! \vec C) \\
\vec T^{-1} \!\vec  X (\vec A\! +\! \vec F) & \vec 0 &  \vec{T}^{-1} \left(\vec X (\vec H\!+\!\vec K) \!-\! \eye\!\!\right)
\end{pmatrix}\!.             
\end{equation}
The Jacobian $\vec J$ can be brought to a different form that clearly portrays the interplay between the matrices comprising the susceptance $B_{j,\ell}$ and the conductance terms $G_{j,\ell}$ of the power lines and machines,
\begin{equation}\label{eq:jacobian_full}
\begin{aligned}
\vec J \!=& \!
    \begin{pmatrix}  \eye &  \vec 0 & \vec 0 \\
    \vec 0 & \vec M^{-1}  & \vec 0 \\
    \vec 0 & \vec 0 &  \vec{T}^{-1} \vec  X \\
    \end{pmatrix}\times\\
    &\quad\left[\!
    \begin{pmatrix} \vec 0 & \eye & \vec 0 \\
    - \vec \Lambda & - \vec D & \vec A^\top  \\
    \vec A  & \vec 0 &  \vec H - \vec  X^{-1} \\
    \end{pmatrix}  
    \! + \!
    \begin{pmatrix} \vec 0  & \vec 0 & \vec 0 \\
    -\vec \Gamma & \vec 0 & \vec C \\
    \vec F & \vec 0 &  \vec K \\
    \end{pmatrix}\! \right]\!.
\end{aligned}
\end{equation}
This decomposition is conspicuously designed to work out the impact of Ohmic losses.
The left matrix in the brackets includes all terms that are present in a lossless grid, and the right matrix composed of the block matrices $\vec \Gamma, \vec C, \vec F, \vec K$ embodies all the matrices associated with resistive losses.
The cleavage into two parts will prove useful hence onward.

\subsection{Linear Stability and Eigenvalues of the Jacobian}\vspace*{-2.5mm}
An equilibrium $(\delta^\circ_j, \omega^\circ_j, E^\circ_j)$ is linearly stable if perturbations in the linearised system~\eqref{eq:linstab1} decay exponentially.
In general, this is the case if and only if all eigenvalues of the Jacobian matrix $\vec J$ have a negative real part~\cite{Kuznetsov2004,Strogatz2015}. 

In the present case one has to take into account that the dynamical system incorporates a fundamental symmetry.
\begin{equation}
\begin{aligned}
\Psi_\alpha: &~ \vec{\delta} \mapsto \vec{\delta} + \alpha \, \vec 1,\\
&~\mathbb{S}^N \to \mathbb{S}^N,
\end{aligned}
\end{equation}
where $\vec 1$ is a vector of ones and $\alpha \in \mathbb{R}$. A shift of all nodal phase angles by a constant value does not have any physical effects: all flows, currents and stability properties remain unaffected.
A geometric interpretation of this symmetry is obtained by viewing the desired operation of the power grid as a limit cycle.
As all points along the cycle are equivalent for power grid operation, one can take an arbitrary point as a representative of the limit cycle and refer to it as `the equilibrium'. 

As a consequence of this symmetry, any perturbation corresponding to a global phase shift or a shift along the limit cycle, respectively, should be excluded from the stability analysis. 
This allows reducing the analysis to the perpendicular subspaces of this symmetry, which are defined as 
\begin{equation}
\begin{aligned}
   \mathcal{D}^{(3)}_\perp &= \left\{ (\vec \xi,\vec{\nu}, \vec \epsilon)\in \mathbb{R}^{3 N}|\vec{1}^\top\vec{\xi}=0\right\},\\
   \mathcal{D}^{(2)}_\perp &= \left\{ (\vec \xi, \vec \epsilon) \in \mathbb{R}^{2N}|\vec{1}^\top\vec{\xi}=0\right\},\\
   \mathcal{D}^{(1)}_\perp &= \left\{ \vec \xi \in \mathbb{R}^{N}|\vec{1}^\top\vec{\xi}=0\right\}.
  \end{aligned}
\end{equation}
These subspaces are always one dimension smaller than the over-branching space.
The subscript $\mathcal{D}^{(\cdot)}_\perp$ refers to the orthogonality devised here, i.e., these spaces are orthogonal to the stable limit-cycle manifold.

Having defined the spaces of operation, one turns to the Jacobian matrix~\eqref{eq:jacobian_full} to unravel the definition of linear stability.
Consider the eigenvalues $\mu_1,\mu_2,\dots,\mu_{3N}\in\mathbb{C}^{3N}$ of the Jacobian defined via
\begin{equation}\label{eq:eigenval1}
 \matr{J} \begin{pmatrix} {\vec{\xi}} \\ {\vec{\nu}} \\ {\vec{\epsilon}}\end{pmatrix} = \mu\begin{pmatrix} {\vec{\xi}} \\ {\vec{\nu}} \\ {\vec{\epsilon}}\end{pmatrix} .
\end{equation} 
There is always one vanishing eigenvalue $\mu_1 = 0$ corresponding to the global shift of all nodal phases, as discussed above.
One excludes this mode from the definition of stability and orders the remaining eigenvalues according to their real parts, without loss of generality,
\begin{equation}
    \mu_1 = 0, \quad
    \Re({\mu_2}) \le \Re({\mu_3}) \le \cdots \le 
    \Re({\mu_{3N}}).
\end{equation}

\subsection{Alternative formulations of the eigenvalue problem}\vspace*{-2.5mm}
We note that the eigenvalue problem~\eqref{eq:eigenval1} can be reformulated in different ways, which are useful for both analytic studies and numerical computation. 
First, one can obtain the eigenvalues $\mu$ from a generalised eigenvalue problem,
\begin{equation}
\begin{aligned}\label{eq:eigen-gep}
    & \begin{pmatrix}
     -\matr \Lambda - \matr \Gamma & \matr  0 & \matr A^\top+ \matr C \\
     \matr A + \matr F & \matr 0 & \matr H - \matr X^{-1} + \matr K   \\
     \matr 0 &  \matr M  & \matr 0
    \end{pmatrix} 
    \begin{pmatrix}
        \vec \xi  \\ \vec \nu \\ \vec \epsilon
    \end{pmatrix}\\  
    & \quad = \mu
    \begin{pmatrix}
     \matr D & \matr M & \matr 0\\
     \matr 0 & \matr 0 & \matr X^{-1} \matr T\\
     \matr M & \matr 0 & \matr 0 
    \end{pmatrix} 
    \begin{pmatrix}
        \vec \xi  \\ \vec \nu \\ \vec \epsilon
    \end{pmatrix}.
\end{aligned}
\end{equation}
To see this, we decompose the original problem~\eqref{eq:eigenval1} in components
\begin{subequations}\label{eq:eigenvalue-comp}
    \begin{align}
    \vec \nu &= \mu \vec \xi , \label{eq:eigenvalue-comp_a}\\
    - \matr M^{-1} \left( (\matr \Lambda+\matr \Gamma) \vec \xi + \matr D \vec \nu - ({\matr A}^\top + \matr C) \vec \epsilon \right)  &= \mu \vec \nu , \quad \quad \label{eq:eigenvalue-comp_b}\\
    \matr T^{-1} \matr X \left( (\matr A+\matr F) \vec \xi +  ({\matr{H}}- \matr X^{-1} + \matr K) \vec \epsilon \right) &= \mu \vec \epsilon.  \label{eq:eigenvalue-comp_c}
    \end{align}
\end{subequations}
We multiply~\eqref{eq:eigenvalue-comp_a} with $\matr M$ and~\eqref{eq:eigenvalue-comp_c} with $\matr X^{-1} \matr T$.
Furthermore, we substitute~\eqref{eq:eigenvalue-comp_a} in~\eqref{eq:eigenvalue-comp_b} and multiply the resulting equation with $\matr M$ and obtain
\begin{equation}
\begin{aligned}
    \matr M \vec \nu &= \mu \matr M \vec \xi, \\
    - (\matr \Lambda+\matr \Gamma) \vec \xi + ({\matr A}^\top+\matr C) \vec \epsilon &=
    \mu \left( \matr M \vec \nu + \matr D  \xi \right), \\
    (\matr A+\matr F) \vec \xi +  (\matr H-\matr X^{-1}\!+\matr K) \vec \epsilon
    &= \mu \matr X^{-1} \matr T \vec \epsilon.
\end{aligned}    
\end{equation}
In matrix form this leads to~\eqref{eq:eigen-gep}.

Second, one can obtain the eigenvalue $\mu$ from a nonlinear eigenvalue problem in a lower dimensional space
\begin{equation}
\begin{aligned}
    & \Bigg[ 
    \begin{pmatrix}
       - \matr \Lambda - \matr \Gamma & \matr{A}^\top + \matr C \\
       \matr A + \matr F & {\matr{H}} - \matr X^{-1}\! + \matr K
    \end{pmatrix}
    - \mu 
    \begin{pmatrix}
       \matr D & \matr 0 \\
       \matr 0 & \matr T^{-1} \matr X
    \end{pmatrix}  \\ 
    & \qquad \qquad
    - \mu^2 
    \begin{pmatrix}
       \matr M & \matr 0 \\
       \matr 0 & \matr 0
    \end{pmatrix}
    \Bigg] 
    \begin{pmatrix}
        \vec \xi \\ \vec \epsilon 
    \end{pmatrix}
     = \begin{pmatrix}
        \vec 0 \\ \vec 0
    \end{pmatrix}.
    \label{eq:eigen-nonlinear}
\end{aligned}
\end{equation}
The remaining component of the eigenvector is then fixed as $\vec \nu = \mu \vec \xi$. We derive this reformulation starting again from the decomposition~\eqref{eq:eigenvalue-comp}. 
Substituting~\eqref{eq:eigenvalue-comp_a} into~\eqref{eq:eigenvalue-comp_b} and multiplying with $\matr M$ eliminates $\vec \nu$. 
Furthermore, we multiply~\eqref{eq:eigenvalue-comp_c} with $\matr T^{-1} \matr X$ and obtain
\begin{equation}
\begin{aligned}
     - (\matr \Lambda+\matr \Gamma) \vec \xi + ({\matr A}^\top+\matr C) \vec \epsilon
     -\mu \matr D \vec \xi - \mu ^2 \matr M \vec \xi &=0, \\
    (\matr A+\matr F) \vec \xi + (\matr H - \matr X^{-1}\! + \matr K) \vec \epsilon
    - \mu \matr T^{-1} \matr X \vec \epsilon 
    &= 0.
\end{aligned}
\end{equation}
In matrix form this results in~\eqref{eq:eigen-nonlinear}.
With this in hand, we now introduce the main lemmata of this work that shall pave the way to several analytical criteria in the later sections.

\section{Analytic stability results}\label{sec:IV}\vspace*{-2.5mm}
\subsection{The lossless case}\vspace*{-2.5mm}
The lossless case was previously analysed in detail in Ref.~ \cite{Sharafutdinov2018}, so we only review the essential results very briefly. 
Most importantly, linear stability is determined by a reduced, hermitian Jacobian matrix. 
We state this result in the following lemma.
\begin{lemma}\label{lemma:Xi}
The linear stability of an equilibrium $(\delta^\circ_j, \omega^\circ_j, E^\circ_j)$ is determined by the reduced Jacobian
\begin{equation}
   \matr \Xi = 
    \begin{pmatrix}
    - \matr \Lambda & \matr A^\top \\
    \matr A &  \matr H - \matr X^{-1} 
    \end{pmatrix}.
    \label{eq:def-Xi-lossless}
\end{equation}
The equilibrium is stable if $\matr \Xi$ is negative definite on ${\mathcal{D}}_\perp^{(2)}$. It is unstable if $\matr \Xi$ is not negative semi-definite.
\end{lemma}
\begin{proof}
Define the Lyapunov function candidate
\begin{equation}
    V = \begin{pmatrix} 
    \vec{\nu} \\
    \vec{\xi} \\
    \vec{\epsilon}\\
    \end{pmatrix}^{\!\!\top}\!\!
    \underbrace{
    \begin{pmatrix} \vec M & \vec 0 & \vec 0 \\
     \vec 0 & \vec \Lambda & -\vec A^\top  \\
     \vec 0 & -\vec A  &  -\left(\vec H\! - \!\vec  X^{-1}\right) \\
    \end{pmatrix}
    }_{=: \matr P}
    \begin{pmatrix} 
    \vec{\nu} \\
    \vec{\xi} \\
    \vec{\epsilon}\\
    \end{pmatrix}
\end{equation}
Then one finds
\begin{equation}
\begin{aligned}
  \dot{V} &= - \dot{\vec{\nu}}^\top\!\vec{M}\vec{\nu} - \vec{\nu}^\top\!\vec{M}\dot{\vec{\nu}}  + \dot{\vec{\xi}}^\top\!\vec{\Lambda}\vec{\xi} + \vec{\xi}^\top\!\vec{\Lambda}\dot{\vec{\xi}}  - \dot{\vec{\epsilon}}^\top\!\vec{A}\vec{\xi} - \vec{\epsilon}^\top\!\vec{A}\dot{\vec{\xi}}  \\ 
  & - \dot{\vec{\xi}}^\top\!\vec{A}^\top\vec{\epsilon} - \vec{\xi}^\top\!\vec{A}^\top\dot{\vec{\epsilon}}- \dot{\vec{\epsilon}}^\top\!\left(\vec{H}\!-\! \vec{X}^{-1}\right)\vec{\epsilon} - \vec{\epsilon}^\top\!\left(\vec{H}\!-\! \vec{X}^{-1}\right)\dot{\vec{\epsilon}}\\
&= - 2{\vec{\nu}}^\top\!\vec{D}\vec{\nu} -   2\!\left[\vec{\xi}^\top\!\vec{A}^\top \vec{X} \vec{T}^{-1}\!\vec{A}\dot{\vec{\xi}} \right. \\
& \quad +\vec{\epsilon}^\top\!\left(\vec{H}\!-\! \vec{X}^{-1}\right) \vec{X} \vec{T}^{-1}\!\vec{A}\dot{\vec{\xi}} \\
& \quad + \vec{\xi}^\top\!\vec{A}^\top\!\vec{T}^{-1}\!\vec{X} \left(\vec{H}\!-\! \vec{X}^{-1}\right)\dot{\vec{\epsilon}} \\
& \quad + \vec{\epsilon}^\top\!\left(\vec{H}\!-\! \vec{X}^{-1}\right) \vec{T}^{-1}\!\vec{X} \left(\vec{H}\!-\! \vec{X}^{-1}\right)\dot{\vec{\epsilon}} \big]\\
&= - 2{\vec{\nu}}^\top\!\vec{D}\vec{\nu}  \\
& - 2\!\left[\vec{\xi}^\top\!\vec{A}^\top\! + \vec{\epsilon}^\top\!\left(\vec{H}\!-\! \vec{X}^{-1}\right)\right]\!\vec{X} \vec{T}^{-1}\!\left[\vec{A}\vec{\xi} + \left(\vec{H}\!-\! \vec{X}^{-1}\right)\vec{\epsilon}\right] \\
&<0.
\end{aligned}
\end{equation}
The last inequality follows as the matrices $\matr X, \matr T, \matr D$ are diagonal with only positive entries.
If $\matr \Xi$ is negative definite, then $\matr P$ is positive definite and the equilibrium is stable according to Lyapunov's stability theorem. 
If $\matr \Xi$ is not negative semi-definite, then also $\matr P$ is not positive semi-definite and the equilibrium is unstable according to Lyapunov's instability theorem.
\end{proof}

The reduced Jacobian can be further decomposed into the subspace corresponding to perturbations of the angles or voltages, respectively. 
This is especially helpful for the derivation of rigorous stability criteria, cf.~\cite{Sharafutdinov2018}. 

\begin{lemma}\label{lemma:Schur-lossless}(Sufficient and necessary stability conditions for lossless systems).
\begin{itemize}\itemsep0em 
\item[I.] The equilibrium $(\delta^\circ_j, \omega^\circ_j, E^\circ_j)$ of the lossless grid is linearly stable if 
(a) the matrix $\vec \Lambda$ is positive definite on $\mathcal{D}^{(1)}_\perp$ and
(b) the matrix $\vec H - \vec X^{-1} + \vec A \vec \Lambda^+ \vec A^\top$ is negative definite,
where $\cdot^{+}$ is the Moore--Penrose pseudoinverse.
The equilibrium is unstable if any of the two matrices is not negative semi-definite. 
\item[II.] The equilibrium $(\delta^\circ_j, \omega^\circ_j, E^\circ_j)$  of the lossless grid is linearly stable if
(a) the matrix $\vec H - \vec X^{-1}$ is negative definite and
(b) the matrix $\vec \Lambda +  \vec A^\top (\vec H - \vec X^{-1})^{-1} \vec A$ is positive definite on $\mathcal{D}^{(1)}_\perp$. The equilibrium is unstable if any of the two matrices is not negative semi-definite. 
\end{itemize}
\end{lemma}

This result follows from Lemma~\ref{lemma:Xi} by applying the Schur complement, where some care has to be taken to distinguish definiteness and semi-definiteness as well as about the domain of the matrices~\cite{Zhang2006}. 
Details are given in Sharafutdinov \textit{et al.}~\cite{Sharafutdinov2018}.

\subsection{The lossy case}\vspace*{-2.5mm}
We now drop the simplification of a lossless grid and analyse how the presence of resistive terms alters the linear stability of the grid. Complete rigorous results are hard to obtain as the relevant matrices are non longer Hermitian. 
In particular, the one-to-one correspondence between the definiteness and the signs of the eigenvalues does no longer apply. 
However, we show that the above results can be generalised in a straightforward way to leading order in the losses.

The central objective of interest is again the reduced Jacobian $\vec \Xi$, which now reads 
\begin{equation}
  \matr \Xi = \begin{pmatrix}
      - \matr \Lambda - \matr \Gamma & \matr A^\top + \matr C\\
      \matr A + \matr F &  \matr H - \matr X^{-1} + \matr K 
      \end{pmatrix}.
\end{equation}
For the further analysis, we decompose it into its hermitian and anti-hermitian part, $\matr \Xi = \matr \Xi_H + \matr \Xi_A$, with
\begin{equation}\label{eq:def-Xi-lossy}
\begin{aligned}
    \matr \Xi_H &= \frac{1}{2}\left( \matr \Xi + \matr \Xi^\top \right)\\
    &= \begin{pmatrix}
      - \matr \Lambda - \matr \Gamma^d & \matr A^\top + \matr N \\
      \matr A + \matr N &  \matr H - \matr X^{-1} 
      \end{pmatrix} \\
    \matr \Xi_A &= \frac{1}{2}\left( \matr \Xi - \matr \Xi^\top\right).
\end{aligned}
\end{equation}
with the hermitian matrices
\begin{equation}\label{eq:def_of_gamma_d}
\begin{aligned}
    \matr \Gamma^d &= \frac{1}{2} \left( \matr \Gamma + \matr \Gamma^\top \right)  \\  
    \matr N &= \frac{1}{2} \left( \matr C + \matr F \right).
\end{aligned}
\end{equation}
One finds that these matrices are all \emph{diagonal} with entries
\begin{equation}
\begin{aligned}
    \Gamma_{j,j}^\mathrm{d} &= \sum_{k\neq j}^N  E_j^{\circ} E_k^{\circ} G_{j,k} \sin(\delta_k^{\circ} - \delta_j^{\circ}),\\
    N_{j,j} &= \sum_{k\neq j}^N E_k^{\circ} G_{j,k} \cos(\delta_k^{\circ} - \delta_j^{\circ}).
\end{aligned}
\end{equation}

We start with providing a rigorous sufficient stability condition, that generalises the condition of negative definiteness of the reduced Jacobian in Lemma~\ref{lemma:Xi}. 

\begin{lemma}
\label{lemma:sufficient-lossy}
The lossy microgrid is stable if for all vectors $\vec x \in \mathbb{C}^{2N}$ 
\begin{align}
    \vec x^\dagger  \matr \Xi_H  \vec x
    < - \frac{\left( \vec x^\dagger  \matr \Psi   \vec x \right)}{
    \left(\vec x^\dagger  \matr \Phi  \vec x \right)^2
    }
    \left( \vec x^\dagger  \matr \Xi_A \vec x \right)^2
    \label{eq:suff-lossy}
\end{align}
with the abbreviations
\begin{equation}
    \Phi = \begin{pmatrix} 
          \matr D  & \matr{0}  \\
          \matr{0} & \matr T^{-1} \matr X  \\
          \end{pmatrix}, \quad
    \Psi = \begin{pmatrix} 
          \matr M  & \matr{0}  \\ \matr{0}  & \vec{0}  \\
          \end{pmatrix}.
\end{equation}
\end{lemma}
\begin{proof}
We start from the nonlinear eigenvalue problem~\eqref{eq:eigen-nonlinear} and multiply from the left with the hermitian conjugate of the eigenstate
$(\vec \xi^\dagger, \vec \epsilon^\dagger )$ to obtain the algebraic equation
\begin{equation}
  \eta_1 + \eta_2 \mu + \eta_3 \mu^2 = 0
  \label{eq:eigenval-quadeq1}
\end{equation}
with 
\begin{equation}
\begin{aligned}
     \eta_1 &= - \begin{pmatrix} \vec \xi \\ \vec \epsilon \end{pmatrix}^{\dagger} 
        \matr \Xi
     \begin{pmatrix} \vec \xi \\ \vec \epsilon \end{pmatrix}, \\
     \eta_2 &= \begin{pmatrix} \vec \xi \\ \vec \epsilon \end{pmatrix}^{\dagger}
              \matr \Phi
              \begin{pmatrix} \vec \xi \\ \vec \epsilon \end{pmatrix} > 0\\
     \eta_3 &= \begin{pmatrix} \vec \xi \\ \vec \epsilon \end{pmatrix}^{\dagger}
              \matr \Psi 
              \begin{pmatrix} \vec \xi \\ \vec \epsilon \end{pmatrix} > 0. 
\end{aligned}
\end{equation}
The coefficients $\eta_2$ and $\eta_3$ are real and strictly positive, as the matrices $\matr D, \matr M, \matr T, \matr X$ are diagonal with strictly positive entries (except for the trivial case $\vec \xi = \vec 0$ for which $\eta_3=0$). 
For the remaining coefficient $\eta_1$ we write
\begin{equation}
\begin{aligned}
    \eta_1 &= \alpha + i \beta, \\
    \Rightarrow \; \alpha &= - \begin{pmatrix} \vec \xi \\ \vec \epsilon \end{pmatrix}^{\dagger} 
        \matr \Xi_H
     \begin{pmatrix} \vec \xi \\ \vec \epsilon \end{pmatrix}, \\
     \beta &= - \begin{pmatrix} \vec \xi \\ \vec \epsilon \end{pmatrix}^{\dagger} 
        \matr \Xi_A
     \begin{pmatrix} \vec \xi \\ \vec \epsilon \end{pmatrix}.
\end{aligned}
\end{equation}
The algebraic equation~\eqref{eq:eigenval-quadeq1} can now be solved for $\mu$ such that
\begin{equation}
    \mu = \frac{-\eta_2 \pm \sqrt{\eta_2^2 - 4 \eta_3 (\alpha+i\beta)}}{2 \eta_3}.
\end{equation}
To ensure stability, the real part of $\mu$ needs to be strictly smaller than zero, which translates to
\begin{equation}
    \Re\left( \sqrt{\eta_2^2 - 4 \eta_3 (\alpha+i\beta)}   \right) < \eta_2.
\end{equation}
One can now show by an explicit calculation that this is the case if
\begin{equation}
      \alpha > \frac{\eta_3 \beta^2}{\eta_2^2}.
\end{equation}
Hence if the assumption~\eqref{eq:suff-lossy} is satisfied for all vectors, we indeed have $\Re (\mu) < 0$ and the equilibrium is stable.
\end{proof}

Lemma~\ref{lemma:sufficient-lossy} provides a rigorous sufficient condition for linear stability in lossy grids. 
However, it might be hard to apply in practice due to its nonlinearity. 
Nevertheless, we can draw some important general conclusions. 
First, the condition~\eqref{eq:suff-lossy} is stricter than in the lossless case as the right-hand side~\eqref{eq:suff-lossy} is generally smaller than zero.
However, this right-hand side is of quadratic order in the losses.
Hence -- to leading order in the losses -- negative definiteness of $\matr \Xi_H$ still guarantees stability. 

We now extend this argument. 
We show that -- to leading order in the losses -- the eigenvalues $\mu_n$ that encode the linear stability are determined only by the hermitian matrix $\matr \Xi_H$. 
We make this statement precise for the generalised eigenvalue problem formulated in~\eqref{eq:eigen-gep}. 

\begin{lemma}\label{lemma:4}
The eigenvalues of the Jacobian are given by the hermitian generalised eigenvalue problem
\begin{equation}
\begin{aligned}
    & \begin{pmatrix}
     -\matr \Lambda - \matr \Gamma^d & \matr A^\top + \matr N & \matr  0 \\
     \matr A + \matr N & \matr H - \matr X^{-1} \matr E &  \matr 0 \\
     \matr 0 & \matr 0 & \matr M
    \end{pmatrix} \vec x_n \\ 
    & \qquad = \mu_n
    \begin{pmatrix}
     \matr D & \matr 0 &  \matr M \\
     \matr 0 & \matr X^{-1} \matr T & \matr 0 \\
     \matr M & \matr 0 & \matr 0 
    \end{pmatrix} \vec x_n.
    \label{eq:gep-lossy-leading}
\end{aligned}
\end{equation}
up to corrections of quadratic order in $\matr \Xi_A$. 
\end{lemma}

\begin{proof}
The result is proven using a standard perturbation theory argument, treating $\matr \Xi_A$ as a small perturbation. 
For the sake of convenience, we abbreviate the matrices in the generalised eigenvalue problem~\eqref{eq:eigen-gep} such that we have the equation
\begin{equation}
    \matr{\mathcal{A}} \vec x_n = \mu_n \matr{\mathcal{B}} \vec x_n.
\end{equation}
We write
\begin{equation}
    \matr{\mathcal{A}} = \matr{\mathcal{A}}_H + \varepsilon \matr{\mathcal{A}}_A,
\end{equation}
where $\matr{\mathcal{A}}_H$ is the hermitian part and $\matr{\mathcal{A}}_A$ is the anti-hermitian part treated as a perturbation. 
We now 
expand eigenstates and eigenvalues as
\begin{equation}
\begin{aligned}\label{eq:perturb}
    \mu_n &= \mu_n^{(0)} + \varepsilon \mu_n^{(1)} +  \varepsilon^2 \mu_n^{(2)} + \ldots, \\ 
    \vec x_n &= \vec x_n^{(0)} + \varepsilon \vec x_n^{(1)} + \varepsilon^2 \vec x_n^{(2)} + \ldots,
\end{aligned}
\end{equation}
and substitute this Ansatz into the generalised eigenvalue problem. 
To zeroth order in $\varepsilon$ we obtain 
\begin{equation}
    \matr{\mathcal{A}}_H \vec x_n^{(0)} = \mu_n^{(0)} \matr{\mathcal{B}} \vec x_n^{(0)},
\end{equation}
that is, we obtain~\eqref{eq:gep-lossy-leading}. 
We note that this problem is hermitian, such that unperturbed eigenvectors can be chosen as real and normalised as
\begin{equation}
    \vec x_m^{(0) \top} \matr{\mathcal{B}} \vec x_n^{(0)} = \delta_{m,n}.
\end{equation}
To first order in $\varepsilon$ we obtain
\begin{equation}
    \matr{\mathcal{A}}_A \vec x_n^{(0)} + \matr{\mathcal{A}}_H \vec x_n^{(1)}
    = \mu_n^{(0)} \matr{\mathcal{B}} \vec x_n^{(1)} 
    + \mu_n^{(1)} \matr{\mathcal{B}} \vec x_n^{(0)}.
\end{equation}
Multiplying from the left by $\vec x_n^{(0)\top}$ and exploiting that $\vec x_n^{(0)\top} \matr{\mathcal{A}}_H = \mu_n^{(0)} \vec x_n^{(0)\top} \matr{\mathcal{B}}$ yields
\begin{equation}\label{eq:first_order_mu}
    \mu_n^{(1)} = \frac{\vec x_n^{(0) \top} \matr{\mathcal{A}}_A \vec x_n^{(0)} }{\vec x_n^{(0) \top} \matr{\mathcal{B}} \vec x_n^{(0)} } \, .
\end{equation}
Now we can use the fact that $\matr{\mathcal{A}}_A$ is anti-symmetric to obtain
\begin{equation}
\begin{aligned}
    &\vec x_n^{(0) \top} \matr{\mathcal{A}}_A \vec x_n^{(0)}
    = \left(  \vec x_n^{(0) \top} \matr{\mathcal{A}}_A \vec x_n^{(0)}    \right)^\top \\
    &\quad = \vec x_n^{(0) \top} \matr{\mathcal{A}}_A^\top \vec x_n^{(0)}
    = - \vec x_n^{(0) \top} \matr{\mathcal{A}}_A \vec x_n^{(0)}.
\end{aligned}
\end{equation}
Hence, we have
\begin{equation}
    \vec x_n^{(0) \top} \matr{\mathcal{A}}_A \vec x_n^{(0)} = 0
    \quad \Rightarrow \quad
    \mu_n^{(1)} = 0.
\end{equation}
That is, the linear order correction to the eigenvalues vanishes, leaving terms of quadratic or higher order. 
\end{proof}

We conclude that -- to leading order in the losses -- only the hermitian part of the Jacobian is relevant for stability.
We can generalise all results from the lossless case if we replace the reduced Jacobian~\eqref{eq:def-Xi-lossless} by the matrix $\matr \Xi_H$ defined in~\eqref{eq:def-Xi-lossy}. 
In particular, Lemma~\ref{lemma:Schur-lossless} is generalised as follows. 

\begin{lemma}
\label{lemma:Schur-lossy}\
To leading order in the Ohmic losses the linear stability of an equilibrium  $(\delta_j^{\circ},\omega_j^{\circ},E_j^{\circ})$ is determined by the hermitian part of the reduced Jacobian matrix:
stable if  $\matr \Xi_H$ is negative definite on $\mathcal{D}_\perp^{(2)}$ and unstable if $\matr \Xi_H$  is not negative semi-definite. 
Stability conditions for this matrix can be decomposed as follows:
\begin{itemize}\itemsep0em 
\item[I.] The matrix $\matr \Xi_H$ is negative definite on $\mathcal{D}_\perp^{(2)}$ if 
(a) the matrix $\vec \Lambda + \matr \Gamma^d$ is positive definite on $\mathcal{D}^{(1)}_\perp$ and
(b) the matrix $\vec H - \vec X^{-1} + (\matr A + \matr N)  (\vec \Lambda + \matr \Gamma^d)^+ (\matr A+ \matr N)^\top$ is negative definite.
The matrix $\matr \Xi_H$ is non-negative semi-definite if any of the two matrices is not negative semi-definite. 
\item[II.] The matrix $\matr \Xi_H$ is negative definite on $\mathcal{D}_\perp^{(2)}$ if
(a) the matrix $\vec H - \vec X^{-1}$ is negative definite and
(b) the matrix $(\vec \Lambda + \matr \Gamma^d)  + (\matr A + \matr N)^\top (\vec H - \vec X^{-1})^{-1} \vec (\matr A + \matr N)$ is positive definite on $\mathcal{D}^{(1)}_\perp$.
The matrix  $\matr \Xi_H$ is non-negative semi-definite if any of the two matrices is not negative semi-definite.
\end{itemize}
\end{lemma}

We will henceforth work with Lemma~\ref{lemma:Schur-lossy}, where we note that the lossless case is recovered when $\matr \Gamma^d=\matr N=\matr 0$ and we return to Lemma~\ref{lemma:Schur-lossless}.

\section{Explicit stability criteria}\label{sec:V}\vspace*{-2.5mm}

\subsection{Angle vs. voltage stability}\vspace*{-2.5mm}

The decomposition of the reduced Jacobian in Lemma~\ref{lemma:Schur-lossy} is of fundamental importance to this work, as it evinces the roles of the rotor-angle and the voltage dynamics for the stability of the third-order model.

Consider first the isolated power-angle dynamics, assuming that the voltages $E_j$ remain fixed. 
Fixing $\vec \epsilon = 0$, the linearised equations of motions read
\begin{equation}
  \frac{\mathrm{d}}{\mathrm{d} t}
  \begin{pmatrix}
    \vec{\xi}\\ \vec{\nu}
  \end{pmatrix} =
  \begin{pmatrix} \vec 0 & \eye \\
      - \vec M^{-1} ( \vec \Lambda + \vec \Gamma) 
      & - \vec M ^{-1} \vec D 
   \end{pmatrix}
  \begin{pmatrix}
      \vec{\xi}\\ \vec{\nu}
   \end{pmatrix}.
\end{equation}
Performing the same simplification as in the previous section, one finds that the isolated rotor-angle dynamics is linearly stable -- to leading order in the losses -- if and only if the matrix $\vec \Lambda + \vec \Gamma^\mathrm{d}$ is positive definite on $\mathcal{D}^{(1)}_\perp$. 

Similarly, consider the isolated voltage dynamics by assuming that the rotor angle remains fixed. 
Fixing $\vec \nu = \vec \xi = 0$, the linearised equations of motion read
\begin{equation}
    \frac{\mathrm{d}}{\mathrm{d} t} \vec \epsilon =
     \vec{T}^{-1} \vec X ( \vec H  - 
     \vec X^{-1} + \vec K) \,
    \vec \epsilon.    
\end{equation}
Hence, one finds that the isolated voltage dynamics is linearly stable -- to leading order in the losses -- if and only if the matrix $\vec H - \vec X^{-1}$ is negative definite.

In conclusion, one finds that the criteria I.~\!(a) and II.~\!(a) in Lemma~\ref{lemma:Schur-lossy} ensure the stability, to linear order in the losses, of the isolated rotor-angle or voltage subsystem, respectively. 
Linear stability of the entire system is ensured if and only if, in addition, the complementary criteria I.~\!(b) or II.~\!(b) are satisfied. 

To further elucidate the nature of the stability conditions, consider the full stability criterion I. in Lemma~\ref{lemma:Schur-lossy}.
Assume that criterion I.~\!(a) is satisfied, i.e., $\vec \Lambda + \vec \Gamma^\mathrm{d}$ is positive definite on $\mathcal{D}^{(1)}_\perp$, and the rotor-angle subsystem is linearly stable to leading order in the losses. 
The complementary criterion I.~\!(b) can then be written as
\begin{equation}\label{eq:conIIb-succ}
   \vec H - \vec X^{-1} \prec - (\vec A + \vec N) (\vec \Lambda + \vec \Gamma^\mathrm{d})^{+} (\vec A + \vec N)^\top,
\end{equation}
where $\prec$ denotes negative definiteness (equivalently, $\succ$ positive definiteness).
This condition is far stricter than the condition of pure voltage stability, $\vec H - \vec X^{-1} \prec 0$. 
Hence, stability of the two isolated subsystems is not sufficient, instead they must comprise a certain `security margin' quantified by the right-hand side of~\eqref{eq:conIIb-succ} in order to maintain linear stability.

Making use of the angle-voltage decomposition, one can derive explicit necessary and sufficient stability criteria. 
To this end, we first consider the isolated subsystems and subsequently the composite dynamics of the full system. 
Note that the lossless case has been discussed in Ref.~\cite{Sharafutdinov2018}, thus here the focus is placed on the impact of Ohmic losses in leading order.

\subsection{Voltage stability}\vspace*{-2.5mm}
Criterion II.~\!(a) in Lemma~\ref{lemma:Schur-lossy} entails the stability of the isolated voltage subsystem -- up to leading order in the losses.
A violation implies the instability of the voltage dynamics, and as a consequence also the instability of the entire system, including the rotor-angle and frequency dynamics. 

Most remarkably, criterion II.~\!(a) includes only the matrices $\vec H$ and $\vec X$, which are also present in the lossless case~\cite{Sharafutdinov2018}.
To leading order, ohmic losses in the transmission lines thus affect voltage stability only indirectly via the position of the respective equilibrium, in particular via the equilibrium rotor angles $\delta_j^\circ$, which enter the matrix $\vec H$. 
Due to the similarity to the lossless case, this work refrains from a detailed analysis of voltage stability and only quotes two results from Sharafutdinov \textit{et al.}~\cite{Sharafutdinov2018}.

\begin{cor}\label{cor:1}
If for all nodes $j=1,\ldots,N$
\begin{equation}
    (X_j - X_j')^{-1} > \sum_{\ell=1}^N B_{j,\ell}, 
\end{equation}
then the matrix $\vec H - \vec X^{-1}$ is negative definite.
\end{cor}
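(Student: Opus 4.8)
The plan is to show that, under the stated hypothesis, $\vec H - \vec X^{-1}$ is strictly diagonally dominant with strictly negative diagonal entries, so that its negative definiteness follows from the Gershgorin circle theorem (or, equivalently, from a direct estimate of the quadratic form).

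First I would record the structural facts that make the estimate work. The matrix $\vec H - \vec X^{-1}$ is real and symmetric: $H_{j,\ell} = B_{j,\ell}\cos(\delta_\ell^\circ - \delta_j^\circ)$ inherits symmetry from $B_{j,\ell} = B_{\ell,j}$ together with the evenness of the cosine, while $\vec X^{-1}$ is diagonal. Hence all its eigenvalues are real and it suffices to show that every eigenvalue is strictly negative. Its diagonal entries are $H_{j,j} - (X_j - X_j')^{-1} = B_{j,j} - (X_j - X_j')^{-1}$, using $\cos 0 = 1$, and the off-diagonal absolute row sums obey $\sum_{\ell\neq j}|H_{j,\ell}| = \sum_{\ell\neq j} B_{j,\ell}\,|\cos(\delta_\ell^\circ - \delta_j^\circ)| \le \sum_{\ell\neq j} B_{j,\ell}$, where I use $B_{j,\ell}>0$ for $j\neq\ell$ and $|\cos|\le 1$.

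Then I would apply Gershgorin: every eigenvalue $\lambda$ of $\vec H - \vec X^{-1}$ lies in a disc centred at $B_{j,j} - (X_j - X_j')^{-1}$ of radius at most $\sum_{\ell\neq j} B_{j,\ell}$, so $\lambda \le B_{j,j} + \sum_{\ell\neq j}B_{j,\ell} - (X_j - X_j')^{-1} = \sum_{\ell=1}^N B_{j,\ell} - (X_j - X_j')^{-1} < 0$ by hypothesis. Since all eigenvalues are negative, the (symmetric) matrix is negative definite. An equivalent route that avoids invoking Gershgorin is to estimate the quadratic form directly: for $\vec y \neq \vec 0$, using $|y_j y_\ell| \le \tfrac12(y_j^2 + y_\ell^2)$ and the symmetry of $\vec B$ one obtains $\vec y^\top \vec H \vec y \le \sum_j y_j^2 \sum_{\ell=1}^N B_{j,\ell}$, whence $\vec y^\top (\vec H - \vec X^{-1})\vec y \le \sum_j y_j^2\big(\sum_{\ell=1}^N B_{j,\ell} - (X_j - X_j')^{-1}\big) < 0$.

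There is no serious obstacle here; the only points that require a little care are (i) confirming the symmetry, so that the Gershgorin bound on the real eigenvalues genuinely certifies definiteness, and (ii) keeping the estimate strict: the bound $|\cos(\delta_\ell^\circ - \delta_j^\circ)|\le 1$ may hold with equality, but the hypothesis $(X_j - X_j')^{-1} > \sum_{\ell} B_{j,\ell}$ is strict, so the conclusion is unaffected. It is also worth remarking that, since the estimate used only the worst case $\cos = \pm 1$, the criterion is independent of the particular equilibrium rotor-angle configuration $\delta_j^\circ$.
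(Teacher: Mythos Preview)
Your argument is correct: the Gershgorin bound (or the equivalent quadratic-form estimate) shows strict diagonal dominance with negative diagonal, hence negative definiteness, and your handling of symmetry and of the strict inequality is sound. The paper itself does not prove this corollary but merely quotes it from the lossless analysis in the cited reference; your Gershgorin approach is nonetheless exactly in the spirit of the paper's own proofs of the neighbouring results (e.g.\ Corollary~\ref{cor:angle-loss}), so there is no methodological divergence to speak of.
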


\begin{cor}\label{cor:2}
If for any subset of nodes $\mathcal{S} \subset \{1,2,\ldots,N\}$,
\begin{align}
    \sum_{j \in \mathcal{S}} (X_j - X_j')^{-1} \le \sum_{j,\ell \in \mathcal{S}} 
                     B_{j,\ell} \cos(\delta_\ell^\circ - \delta_j^\circ),
\end{align}
then the matrix $\vec H - \vec X^{-1}$ is not negative definite and the necessary stability condition in Lemma~\ref{lemma:Schur-lossy} is violated.
\end{cor}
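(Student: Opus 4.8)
The plan is to contradict criterion~II.(a) of Proposition~\ref{Prop:1} directly, by exhibiting one nonzero vector on which the quadratic form of $\vec H - \vec X^{-1}$ fails to be negative. Since $B_{j,\ell}=B_{\ell,j}$, the cosine is even, and $\vec X^{-1}$ is diagonal, the matrix $\vec H - \vec X^{-1}$ is real symmetric; hence ``negative definite'' carries its ordinary meaning and it suffices to produce a single $\vec y \neq \vec 0$ with $\vec y^\top (\vec H - \vec X^{-1}) \vec y \ge 0$. (Implicitly $\mathcal{S}\neq\emptyset$, otherwise the hypothesis is vacuous.)

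The natural choice is the indicator vector $\vec y = \vec{1}_{\mathcal{S}}$, with $(\vec{1}_{\mathcal{S}})_j = 1$ for $j\in\mathcal{S}$ and $0$ otherwise, which is nonzero because $\mathcal{S}$ is nonempty. Reading off the entries of $\vec H$ in \eqref{eq:def_matrices} (including the diagonal terms $H_{j,j}=B_{j,j}$), one gets $\vec{1}_{\mathcal{S}}^\top \vec H \vec{1}_{\mathcal{S}} = \sum_{j,\ell\in\mathcal{S}} B_{j,\ell}\cos(\delta_\ell^\circ - \delta_j^\circ)$, the double sum running over all ordered pairs, while $\vec{1}_{\mathcal{S}}^\top \vec X^{-1} \vec{1}_{\mathcal{S}} = \sum_{j\in\mathcal{S}} (X_j - X_j')^{-1}$. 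Subtracting and invoking the hypothesis of the corollary yields $\vec{1}_{\mathcal{S}}^\top (\vec H - \vec X^{-1}) \vec{1}_{\mathcal{S}} = \sum_{j,\ell\in\mathcal{S}} B_{j,\ell}\cos(\delta_\ell^\circ - \delta_j^\circ) - \sum_{j\in\mathcal{S}} (X_j - X_j')^{-1} \ge 0$, so $\vec H - \vec X^{-1}$ cannot be negative definite.

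It then remains only to translate this into instability of the equilibrium, which I would do by citing criterion~II of Proposition~\ref{Prop:1}: since linear stability holds \emph{if and only if} both II.(a) and II.(b) are satisfied, the violation of II.(a) already forces the equilibrium to be not linearly stable (equivalently, the isolated voltage block $\dot{\vec\epsilon} = \vec T^{-1}\vec X(\vec H + \vec K - \vec X^{-1})\vec\epsilon$, whose hermitian part does not see the skew-symmetric $\vec K$, carries a non-decaying mode, and this propagates to the full third-order dynamics). There is no genuine obstacle in this argument --- it is a one-line test-vector estimate; the only care required is bookkeeping: confirming the symmetry of $\vec H - \vec X^{-1}$ so that definiteness is the usual notion, keeping $\mathcal{S}\neq\emptyset$, matching the diagonal terms of the double sum to the definition of $\vec H$, and quoting the necessity direction of Proposition~\ref{Prop:1}. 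If a strictly unstable mode is desired rather than a merely marginal one, I would note that a strict inequality in the hypothesis gives $\vec{1}_{\mathcal{S}}^\top (\vec H - \vec X^{-1}) \vec{1}_{\mathcal{S}} > 0$, hence a genuinely positive eigenvalue of $\vec H - \vec X^{-1}$, with the equality case being the critical one.
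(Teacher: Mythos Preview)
Your proposal is correct: the indicator-vector test $\vec y=\vec 1_{\mathcal S}$ together with the necessity direction of Proposition~\ref{Prop:1}~II.(a) is exactly the right argument, and your bookkeeping (symmetry of $\vec H-\vec X^{-1}$, inclusion of the diagonal $B_{j,j}$ terms, $\mathcal S\neq\emptyset$) is in order. The paper itself does not supply a proof of this corollary---it is quoted from \cite{Sharafutdinov2018} as a known result for the lossless case---so there is nothing to compare against beyond noting that your argument is the standard one.
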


\subsection{Rotor-angle stability}\vspace*{-2.5mm}

Criterion I.~\!(a) in Lemma~\ref{lemma:Schur-lossy} entails the stability of the isolated rotor-angle subsystem. 
Briefly take the lossless case into consideration, for which rotor-angle stability is determined by the matrix $\vec \Lambda$. 
The isolated subsystem is stable if $\vec \Lambda$ is positive definite on $\mathcal{D}_\perp^{(1)}$ or, equivalently, if the eigenvalues satisfy $0 < \lambda_2 < \cdots < \lambda_N$. 
One can directly derive sufficient stability criteria in terms of the angle differences in the grid: If for all connections $(j,\ell)$ in a power grid one has
\begin{equation}
   \cos\left(\delta_j^\circ - \delta_\ell^\circ \right) > 0, 
\end{equation}
then the isolated rotor-angle subsystem is stable.
This follows from the fact that $\vec \Lambda$ is a proper Laplacian matrix of a weighted undirected graph, which is well known to be positive definite on $\mathcal{D}_\perp^{(1)}$. 
If the condition is not satisfied for a line, the matrix $\vec \Lambda$ rather describes a signed graph, for which positive definiteness is more involved~\cite{Manik2014}. 
Sufficient and necessary criteria have been obtained in Refs.~\cite{Zelazo2014,Song2015,Chen2016a,Chen2016b}. 

One can generalise the above condition to power grids with Ohmic losses in the following way.
\begin{cor}\label{cor:3}
If for all connections $(j,\ell)$ in a power grid, one has
\begin{equation}\label{eq:angle-critline}
    B_{j,\ell} \cos\!\left(\delta_\ell^\circ- \delta_j^\circ\right)
  + G_{j,\ell} \sin\!\left(\delta_\ell^\circ- \delta_j^\circ\right)
  > 0, 
\end{equation}
then the eigenvalue $\lambda_2,\ldots,\lambda_N$ of $\vec \Lambda + \vec \Gamma$ have positive real part and the matrix $\vec \Lambda + \vec \Gamma^\text{d}$ is positive definite on $\mathcal{D}_\perp^{(1)}$, such that the isolated angle subsystem is linearly stable to leading order in the losses. 
\end{cor}
\begin{proof}
The statement can be proved by applying Ger\v{s}gorin's circle theorem~\cite{Gerschgorin1931} to $\vec \Lambda + \vec \Gamma$. Each eigenvalue of this matrix $\lambda_j$ is bound to exist in a disk of radius $R_j = \sum_{\ell \neq j} |\Lambda_{j,\ell} + \Gamma_{j,\ell}|$ around the centre $\Lambda_{j,j} + \Gamma_{j,j}$ such that
\begin{equation}
    | \lambda_j - (\Lambda_{j,j} + \Gamma_{j,j})|
    \le  \sum_{\ell \neq j} | \Lambda_{j,\ell} + \Gamma_{j,\ell}|.
\end{equation}
If condition~\eqref{eq:angle-critline} is satisfied, one can simplify this relation to 
\begin{equation}
\begin{aligned}
    | \lambda_j - (\Lambda_{j,j} + \Gamma_{j,j})|
    \le  & \sum_{\ell \neq j}  \Lambda_{j,\ell} + \Gamma_{j,\ell} \\
    & = (\Lambda_{j,j} + \Gamma_{j,j}),
\end{aligned}
\end{equation}
which directly yields
\begin{equation}
    \Re{(\lambda_j)} \ge 0.
\end{equation}
Now one furthers show that $\lambda_1 = 0$ is the only vanishing eigenvalue of $\vec \Lambda + \vec \Gamma$ such that 
\begin{equation}
    \Re({\lambda_j}) > 0, \qquad j=2,\ldots,N.
\end{equation}
For every non-zero vector $\vec x \in \mathcal{D}_\perp^{(1)}$, we thus have
\begin{equation}
   \vec x^\top (\vec \Lambda + \vec \Gamma^\text{d})\, \vec x
   = \Re \left[ \vec x^\top (\vec \Lambda + \vec \Gamma)\, \vec x  \right]
   > 0,
\end{equation}
and $(\vec \Lambda + \vec \Gamma^\text{d})$ is positive definite on $\mathcal{D}_\perp^{(1)}$.
\end{proof}

We see that even the case of rotor-angle stability becomes much more involved in the lossy case due to the presence of the matrix $\vec \Gamma$. 
This holds especially for the interpretation of results in terms of the network structure.
In the lossless case the stability condition can be rephrased as $\lambda_2>0$, which is particularly convenient as $\lambda_2$ is a measure of the network's algebraic connectivity.
Hence, the stability condition can be interpreted in terms of graph topology and connectivity~\cite{Newman2018}. 
This relation no longer applies in the lossy case.
In particular, $\vec \Lambda + \vec \Gamma$ is a Laplacian, but of a directed signed graph.
Hence, the eigenvalues are not guaranteed to be real.
In contrast, the matrix $\vec \Lambda + \vec \Gamma^d$ is hermitian and thus has real eigenvalues, but it is no longer a Laplacian matrix such that the interpretation of its lowest non-zero eigenvalue as a connectivity does no longer hold. 
However, the relation still holds approximately if we restrict ourselves to the leading order impact of Ohmic losses. 

\begin{lemma}\label{lemma:6}
To leading order in the losses, the eigenvalues of $\vec \Lambda + \vec \Gamma$ and $\vec \Lambda + \vec \Gamma^d$ coincide.
\end{lemma}
\begin{proof}
In identical fashion to the proof of Lemma~\ref{lemma:4}, consider 
\begin{equation}
    \left(\vec \Lambda + \vec \Gamma\right)\vec{x}_n = \lambda_n\vec{x}_n,
\end{equation}
with $\vec{x}_n$ and $\lambda_n$ the respective eigenstates and eigenvalues.
Separate $\vec \Lambda + \vec \Gamma$ into a hermitian and anti-hermitian parts and take the anti-hermitian part as a perturbation.
Consider an expansion of the eigenstates and normalised eigenvalues as in~\eqref{eq:perturb}.
To leading order in the losses
\begin{equation}
    \vec x_n^{(0) \top} \left(\vec \Lambda + \vec \Gamma\right) \vec x_n^{(0)} = \vec x_n^{(0) \top} (\vec \Lambda + \vec \Gamma^{\mathrm{d}})\, \vec x_n^{(0)} = \lambda_n,
\end{equation}
thus the eigenvalues of $\vec \Lambda + \vec \Gamma$ and $\vec \Lambda + \vec \Gamma^{\mathrm{d}}$ coincide.
\end{proof}

In the following we will formulate several stability criteria for the full system applying to the leading order in the losses. 
We frequently use the eigenvalues $\lambda_2$, which is assumed to be real and interpreted as a connectivity, and the associated eigenvector $\vec v_F$ called the Fiedler vector~\cite{Fiedler1973, Fiedler1975, Chung1997, Newman2018}.
We stress that this is not necessarily true, but is appropriate to leading order in the losses as shown above.

\subsection{Mixed instabilities}\vspace*{-2.5mm}

We now turn to the interplay of voltage and angle stability, i.e., further investigating criteria I.~\!(b) and II.~\!(b) in Lemma~\ref{lemma:Schur-lossy}.
Unless stated otherwise, consider an equilibrium such that the criteria I.~\!(a) and II.~\!(a) in Lemma~\ref{lemma:Schur-lossy} are satisfied.
Hence, the isolated subsystems are stable, but the full system can still become unstable.

To begin, consider the case where the voltage dynamics are very stiff, i.e., the case where $(X_j-X'_j)$ are small.
Recall that in the limit $(X_j-X'_j) \rightarrow 0$ the voltage dynamics are trivially stable such that stability is determined solely by the angular subsystem.
One can extend this analysis to the case of small but non-zero $(X_j-X'_j)$ and relate stability to the connectivity of the power grid.
The stability of the isolated rotor-angle subsystem is ensured if (cf.~criterion I.~\!(a) in Lemma~\ref{lemma:Schur-lossy}, or Refs.~\cite{Sharafutdinov2018, Doerfler2010})
\begin{equation}
    \Re{(\lambda_2)} > 0,
\end{equation}
where $\lambda_2$ is the lowest non-zero eigenvalue of the Laplacian $\vec \Lambda + \vec \Gamma$, interpreted as the algebraic connectivity, which is real to leading order in the Ohmic losses.

\begin{cor}\label{cor:4}
To leading order in the Ohmic losses, a necessary condition for the stability of an equilibrium point is given by
\begin{align}\label{eq:stab-lambda-small-x}
    \lambda_2 > &~\! \vec v_F^\dagger \left[ \vec A^\top \vec X \vec A + 2\vec A^\top \vec X \vec N + \vec N \vec X \vec N \right] \vec v_F \\
    & \qquad \qquad\qquad \qquad\qquad \qquad+ \mathcal{O}((X_j - X'_j)^2),
\end{align}
where $\vec v_F$ denotes the Fiedler vector of the Laplacian $\vec \Lambda + \vec \Gamma$ for $(X_j - X'_j) \equiv 0$.
\end{cor}

\begin{proof}
The normalised Fiedler vector, at $(X_j - X'_j) \equiv 0$, is denoted $\vec v_F$. The actual normalised Fiedler vector, for a particular non-zero value of the $(X_j - X'_j)$, is denoted $\vec v'_F$, such that
\begin{equation}
   \vec v'_F = \vec v_F + \mathcal{O}((X_j - X'_j)^1).
\end{equation}
Take the expansion
\begin{equation}
  -(\vec H - \vec X^{-1})^{-1} = (\vec X^{-1}- \vec H)^{-1} =  \sum_{\ell = 0}^\infty \vec X(\vec X \vec H)^\ell,
\end{equation}
such that at lowest order one obtains
\begin{equation}
   (\vec X^{-1}- \vec H)^{-1} = \vec X + \mathcal{O}((X_j - X'_j)^2). 
\end{equation}
Now, Lemma~\ref{lemma:Schur-lossy}, criterion II.~\!(b)  can be reformulated as follows: For all non-zero vectors $\vec y \in \mathbb{C}^N$ we must have
\begin{equation}
    \vec y^\dagger (\vec \Lambda + \vec\Gamma^\text{d}) \vec y > 
    \vec y^\dagger (\vec A + \vec N)^\top (\vec X^{-1}- \vec H)^{-1} (\vec A+ \vec N) \vec y.
\end{equation}
For a particular choice of $\vec y$ one obtains a necessary condition for stability. Taking $\vec y = \vec v'_F$, the above results in
\begin{equation}
   \lambda_2 > \vec {v'}_F^{\dagger} (\vec A + \vec N)^\top
      (\vec X^{-1}- \vec H)^{-1} (\vec A + \vec N) \vec v'_F ,
\end{equation}
were applying the aforementioned expansion on the right-hand side, at leading order in $(X_j - X'_j)$, yields
\begin{equation}
  \lambda_2 > \vec v_F^\dagger (\vec A + \vec N)^\top \vec X (\vec A + \vec N) \vec v_F + \mathcal{O}((X_j - X'_j)^2),
\end{equation}
taking into account that the eigenvalues of $\Lambda+\Gamma$ and $\Lambda+\Gamma^d$ coincide to leading order in the lossy case (cf. Lemma \ref{lemma:6}).
Given now the symmetries of $\vec A$, $\vec N$, and $\vec X$, one can expand the result as
\begin{equation}
\begin{aligned}
    \lambda_2 > ~&\vec v_F^\dagger \left[ \vec A^\top \vec X \vec A + 2\vec A^\top \vec X \vec N + \vec N \vec X \vec N \right] \vec v_F
     \\& \qquad\qquad\qquad\qquad\qquad\qquad+\mathcal{O}((X_j - X'_j)^2). 
\end{aligned}
\end{equation}
This concludes the proof.
This corollary entails a previous result in Ref.~\cite{Sharafutdinov2018}.
\end{proof}

Note that each term of the matrices on the right hand side of~\eqref{eq:stab-lambda-small-x} is symmetric and hence contributes positively, adding to the lower bound on the algebraic connectivity $\lambda_2$ of the system.
This implies that resistive networks \textit{always} require a higher degree of connectivity to ensure stability.

\begin{cor}\label{cor:5}
A resistive power grid needs to ensure
\begin{equation}\label{eq:cor5}
    \lambda_2 >   \sum_j (X_j - X'_j)v_{Fj}^2\left(\sum_{k\neq j}^N E_k^{\circ} G_{j,k} \right)^{\!2}\!\!,
\end{equation}
in the limiting case of no power exchange, to leading order in the losses.
\end{cor}
\begin{proof}
If there is a negligible power exchange in the power grid, all rotor angles $\delta^\circ_j, \forall j$ are identical, such that
\begin{equation}
    \cos(\delta_\ell^\circ - \delta_j^\circ) = 1, \quad
    \sin(\delta_\ell^\circ - \delta_j^\circ) = 0,
\end{equation}
for all connections $(j,\ell)$. This results in $A_{j,\ell}=0$ and $\Gamma^d_{j,j} = 0$ in~\eqref{eq:def_matrices}, and Corollary~\ref{cor:4} reads
\begin{equation}
        \lambda_2 > \vec v_F^\top \vec N \vec X \vec N \vec v_F,
\end{equation}
where all matrices are diagonal matrices.
Writing the terms explicitly yields~\eqref{eq:cor5}, entailing a lower bound to the connectivity of a power grid with resistive elements while considering only leading order of the losses.
\end{proof}

Before proceeding with the final corollaries, note that despite the cumbersome matrix notation employed here, one can still extract very useful information -- which can easily be computed numerically if desired -- by utilising different \textit{matrix norms}.
\begin{lemma}
Let $\vec Z\in \mathbb{C}^{N\times N}$ and $\vec W\in \mathbb{C}^{N\times N}$ be two matrices, and let $\| \cdot \|_{n}$ denote an $n$-induced matrix norm, one has 
\begin{equation}
    \| \vec Z \vec W \|_{n} \leq \| \vec Z \|_{n}\| \vec W \|_{n},
\end{equation}
i.e., all induced matrix norms are sub-multiplicative.
\end{lemma}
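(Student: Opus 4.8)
The plan is to prove sub-multiplicativity of any induced matrix norm directly from the definition, which is the standard textbook argument; no structural feature of the power-grid problem is needed here. Recall that for a vector norm $\|\cdot\|$ on $\mathbb{C}^N$ the induced (operator) norm of a matrix $\vec Z$ is
\begin{equation}
\| \vec Z \|_{n} = \sup_{\vec x \neq \vec 0} \frac{\| \vec Z \vec x \|_{n}}{\| \vec x \|_{n}},\nonumber
\end{equation}
and the two properties I will lean on are the defining inequality $\| \vec Z \vec x \|_{n} \le \| \vec Z \|_{n} \, \| \vec x \|_{n}$ for every vector $\vec x$, and the fact that this supremum is finite (attained on the unit sphere, which is compact).

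First I would fix an arbitrary nonzero $\vec x \in \mathbb{C}^N$ and apply the defining inequality twice in succession: writing $\vec W \vec x$ as a single vector, one has $\| \vec Z (\vec W \vec x) \|_{n} \le \| \vec Z \|_{n} \, \| \vec W \vec x \|_{n}$, and then $\| \vec W \vec x \|_{n} \le \| \vec W \|_{n} \, \| \vec x \|_{n}$. Chaining these gives $\| \vec Z \vec W \vec x \|_{n} \le \| \vec Z \|_{n} \, \| \vec W \|_{n} \, \| \vec x \|_{n}$. Dividing by $\| \vec x \|_{n}$ and taking the supremum over all nonzero $\vec x$ yields exactly $\| \vec Z \vec W \|_{n} \le \| \vec Z \|_{n} \, \| \vec W \|_{n}$, which is the claimed sub-multiplicativity.

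There is essentially no obstacle: the only point requiring a word of care is that the supremum defining $\| \vec Z \vec W \|_{n}$ is taken over the same set of vectors used in the pointwise bound, so the inequality passes to the supremum without change; and one should note in passing that the product $\vec Z \vec W$ is again a bounded operator (trivially, in finite dimensions), so its induced norm is well defined. Since the excerpt only asks for the sub-multiplicativity bound and not for sharpness, the two-line chaining argument above is complete, and the same reasoning applies verbatim to the real case $\vec Z, \vec W \in \mathbb{R}^{N\times N}$ used elsewhere in the paper.
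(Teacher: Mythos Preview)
Your argument is correct and is the standard textbook proof of sub-multiplicativity for induced operator norms. The paper itself does not provide a proof of this lemma at all; it simply states the result as a known fact before proceeding, so there is nothing to compare against beyond noting that your chaining argument is exactly the canonical one.
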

Furthermore, recall that $\| \cdot \|_{2}$ denotes the $\ell_2$-norm for vectors, also known as \textit{spectral norm} or \textit{Euclidean norm}.

\begin{cor}\label{cor:6}
If a positive algebraic connectivity $\lambda_2>0$, and all nodes $j=1,\ldots, N$, 
\begin{equation}\label{eq:stab-cond-Xl}
   (X_j\! -\! X'_j)^{-1} - \sum_{\ell=1}^N B_{j,\ell} > 
     \frac{\| (\vec A \!+\! \vec N)  \|_{2}  \| (\vec A\! +\! \vec N)^\top \|_{2}}{\lambda_2},
\end{equation}
where $\| \cdot \|_2$ is the induced $\ell_2$-norm, then an equilibrium point is linearly stable to leading order in the losses.
\end{cor}
\begin{proof}
A positive algebraic connectivity $\lambda_2>0$ implies that $\vec \Lambda + \vec \Gamma^d$ is positive definite on $\mathcal{D}^{(1)}_\perp$, and criterion I.~\!(a) in Lemma~\ref{lemma:Schur-lossy} is satisfied.

Consider now criterion I.~\!(b) in Lemma~\ref{lemma:Schur-lossy}. Using Ger\v{s}gorin's circle theorem, as in the proof of Corollary~\ref{cor:3}, one finds that condition~\eqref{eq:stab-cond-Xl} imply
\begin{equation}
    (\vec X^{-1} - \vec H ) - \lambda_2^{-1}\| (\vec A + \vec N)  \|_{2}  \| \vec (\vec A + \vec N)^\top \|_{2} \eye ,
\end{equation}
is positive definite. Noting that to leading order in the losses we have $\lambda_2^{-1}=\| (\vec \Lambda + \vec \Gamma)^+ \|_{2} =\| (\vec \Lambda + \vec \Gamma^d)^+ \|_{2} $, this implies that $\forall \vec y \in \mathbb{R}$
\begin{equation}
\begin{aligned}
    \vec y^\top& (\vec X^{-1} - \vec H) \vec y  \\
    &> \| \vec A + \vec N \|_{2} \| (\vec \Lambda + \vec \Gamma^d)^+ \|_{2} \| (\vec A +\vec N)^\top \|_{2}
         \| \vec y \|^2  \\
    & \ge \| (\vec A +\vec N) (\vec \Lambda + \vec \Gamma^d)^+ \vec (\vec A +\vec N)^\top \|_{2} \| \vec y \|^2  \\
    & \ge \vec y^\top (\vec A +\vec N) (\vec \Lambda + \vec \Gamma^d)^+ (\vec A +\vec N)^\top \vec y.
\end{aligned}
\end{equation}
Hence, matrix $\vec H - \vec X^{-1} + (\vec A +\vec N) (\vec \Lambda + \vec \Gamma^d)^+ (\vec A +\vec N)^\top$ is negative definite and criterion I.~\!(b) in Lemma~\ref{lemma:Schur-lossy} is satisfied. 
The equilibrium is linearly stable to leading order in the losses.
\end{proof}

\begin{cor}\label{cor:7}
If by criterion II.~\!(a) in Lemma~\ref{lemma:Schur-lossy} the matrix $ \vec H - \vec X^{-1}$ is negative definite, and if the algebraic connectivity $\lambda_2$ satisfies
\begin{equation}\label{eq:stab-cond-lam}
    \lambda_2 > \| (\vec A + \vec N)^\top \vec  (\vec H  - \vec X^{-1})^{-1}  (\vec A + \vec N)\|_2,
\end{equation}
where $\| \cdot \|_2$ is the induced $\ell_2$-norm, then, to leading order in the losses, the equilibrium point is linearly stable.
\end{cor}
\begin{proof} 
Assume that $ \vec H - \vec X^{-1}$ is negative definite as given by criterion II.~\!(a) in Lemma~\ref{lemma:Schur-lossy}.
The assumption~\eqref{eq:stab-cond-lam} implies that $ \forall \vec y \in \mathcal{D}^{(1)}_\perp$
\begin{equation}
\begin{aligned}
   \vec y^\top  (\vec \Lambda~+ &~ \vec \Gamma^d)\vec y \ge \lambda_2 \|\vec y\|^2  \\
   &> \| (\vec A + \vec N)^\top  (\vec H \!-\!\vec X^{-1})^{-1}  (\vec A + \vec N) \|_2 
          \|\vec y\|^2  \\
   &\ge \vec y^\top (\vec A + \vec N)^\top  (\vec H \!-\!\vec X^{-1})^{-1}  (\vec A + \vec N) \vec y, 
\end{aligned}
\end{equation}
again noticing that the eigenvalues  for $(\vec \Lambda + \vec \Gamma^d)$ and $(\vec \Lambda + \vec \Gamma)$ coincide to leading order. Thus the matrix $(\vec \Lambda + \vec \Gamma^d) + (\vec A + \vec N)^\top (\vec H - \vec X^{-1})^{-1} (\vec A + \vec N)$ is negative definite in $\mathcal{D}^{(1)}_\perp$.
Criterion II.~\!(b) in Lemma~\ref{lemma:Schur-lossy} is therefore satisfied and the equilibrium is linearly stable to leading order in the losses.
\end{proof}

\section{Numerical Analysis}\label{sec:VI}\vspace*{-2.5mm}

In this section, we present a numerical analysis on two model systems to test how tight are the bounds given by the above criteria, i.e., Corollaries~\ref{cor:1} to~\ref{cor:7}, and ultimately showcase their utility.
First, a system consisting of two machines with one acting as a generator, producing power ($P^{m} > 0$), the other acting as a motor, consuming power ($P^{m} < 0$)~\cite{Filatrella2008,Schmietendorf2016}.
Second, a system compromised of three motors and three generators, connected in a ring.
The topology and parameters are given in Fig.~\ref{fig:setup_test_systems}.

While the active power of each node and the admittance of the lines connecting each node can differ for different nodes, all other parameters (e.g., difference in reactance, damping, inertia, relaxation time, and internal voltages) are set to a single value for each machine.
To check the stability boundary we increase/decrease $P$ at all nodes proportionally. 
More precisely, we start from the base values in Fig.~\ref{fig:setup_test_systems} and multiply by $P_f$ to change both the output of generator nodes and the consumption of consumer nodes, thus increasing the load of the transmission lines.
Evaluating whether the corollaries correspond with the linear stability analysis necessitates solving for fixed points $(\mathbf{\delta}^\circ,\mathbf{\omega}^\circ,\mathbf{E}^\circ)$ given by~\eqref{eq:3rd-fixed}.
Specifically, solutions that have a vanishing angular frequency $\omega_j^\circ=0 \; \forall \; j$, since they correspond to solutions with the system operating at the desired reference frequency.
While there are multiple possible solutions of~\eqref{eq:3rd-fixed} that carry no physical meaning (e.g., having a negative voltages) we focus on one stable solution with physical meaning.

\begin{figure}
    \includegraphics[width=\columnwidth]{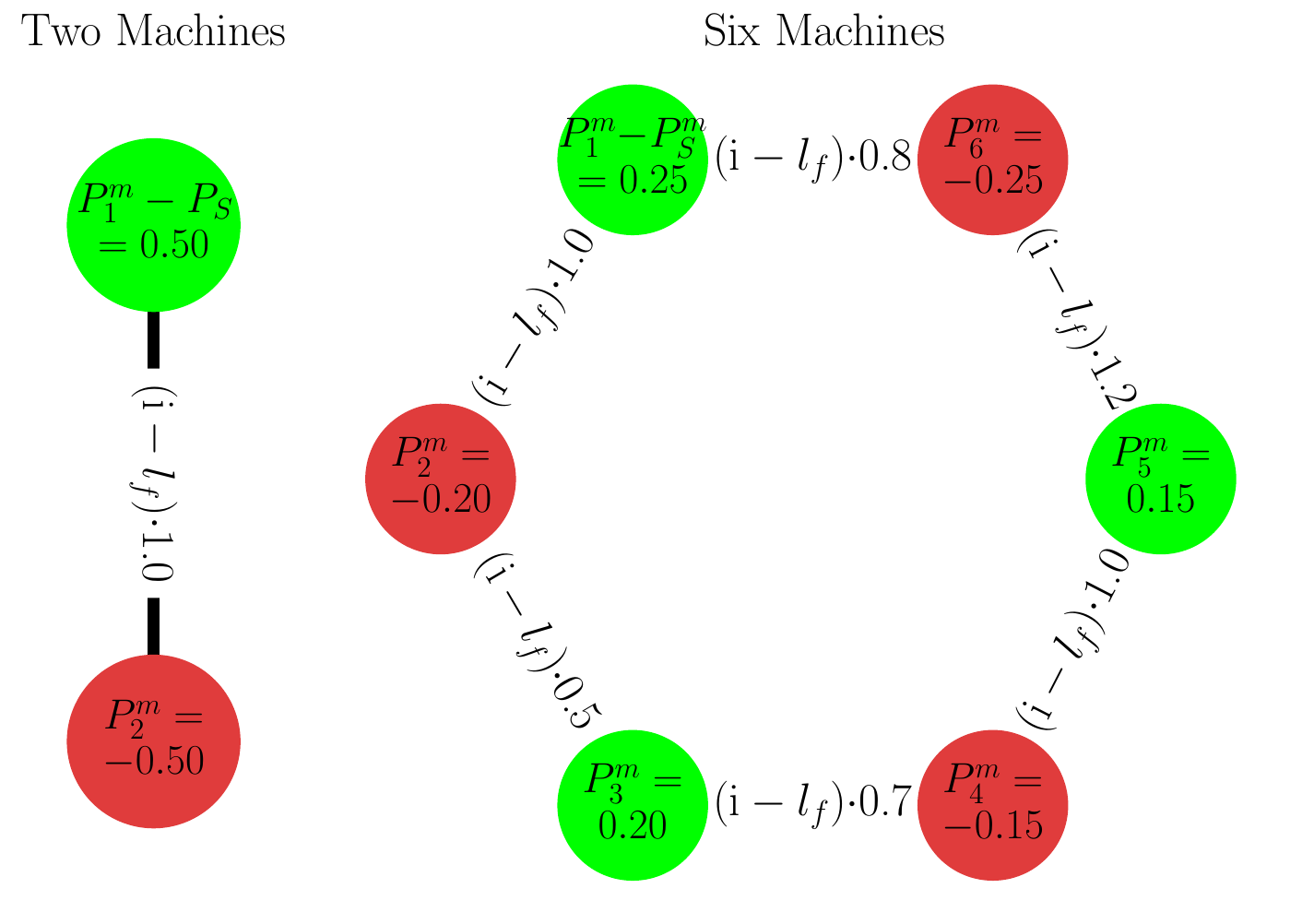}\vspace*{-2.5mm}
    \caption{Topology of the two-machine system and six-machine system that were used in the numerical study.
    Synchronous generators and motors are indicated by green and red vertex colour, respectively.
    The vertex labels show the active power at each machine $P^{m}_i$.
    The power $P^{m}_{\mathrm{S}}$ that is needed to balance the system in case of transmission losses depends on the calculated fixed point.
    The edge labels show the admittance of the lines between the machines with the real part $G_{j,l}$ that is associated with losses given by the product of the imaginary part of the admittance $B_{j,l}$ and the loss factor $l_f$.
    The shunt admittance is chosen as $B_{i,s}=0.2$ for the two-machine system and $B_{i,s}=0$ for the six-machine system.
    In both cases the damping constant $D$, inertia $M$, relaxation time $T$, and internal voltage $E_f$ were equal for each machine.
    They were chosen as $D=0.2$, $M=1$, $T=2$ and $E_f=1$.}\label{fig:setup_test_systems}\vspace*{-2.5mm}
\end{figure}

In order to find a stable state of the system, we use the double-checking procedure described below.
First, we solve the equations and find a fixed point using a root solver provided by python's SciPy package~\cite{2020SciPy-NMeth} starting from the solution of the linearised equations.
Subsequently, we perform a the two-step procedure:
\vspace*{-1.5mm}
\begin{enumerate}\itemsep0em
    \item We perform a first check of the stability of the fixed point by evaluating the eigenvalue spectrum of the associated Jacobian given in~\eqref{eq:jacobian}.
    \item We perform a second check of the stability of the fixed point by perturbing the fixed point by a small random disturbance and numerically solved the full equations in time domain using an appropriate fifth-order adaptive numerical solver~\cite{Ansmann2018}.
\end{enumerate}
\vspace*{-1.5mm}
If the fixed point found in step 1) is not linearly stable, step 2) slightly perturbs the system, forcing it to relax to a new fixed point, which we take as the final stable fixed point. 
This double-checking procedure, in contrast with merely employing the root-finding algorithm, ensures the fixed point that is found is stable.
The subsequent analysis will require varying the system's parameters.
To ensure the system remains in a stable fixed point, we change the system's parameters in small steps (adiabatically).
For sufficiently small steps, the fixed point changes only slightly and, given the system remains stable, a new fixed point can most likely be found by the root-solving algorithm when initialising the search with the previously obtained fixed point.

The tightness and therefore usefulness the corollaries is tested for the lossless case ($G_{j,\ell}=0$) as well as for increasing losses.
More precisely, we assume a fixed ratio of conductances and admittances, $G_{j,\ell} = l_f B_{j,\ell}$, for all lines $(j,\ell)$. 
By increasing the loss-factor $l_f$ and thus increasing the resistive losses, we investigate how the bounds of stability change and how they compare to the numerical results, keeping in mind that the corollaries are only correct up to leading order when considering losses.
Since we check the corollaries for different loss factors $l_f$, this approximation is therefore tested. 
For both instances we scan over a range of different active power levels (i.e., $P_1$ or $P_f$) and differences in reactance $\Delta X = X_j - X_j' = X - X'$ to find sets of parameters where a stable fixed point exists.

First, an examination of the pure voltage and pure rotor-angle stability is put forward, comparing Corollary~\ref{cor:1} and~\ref{cor:2}.
Second, the mixed instability corollaries are tested.
Corollaries~\ref{cor:4}, \ref{cor:6}, and \ref{cor:7} are tested in a similar setting as above.
To distinguish between different instabilities the eigenvalues and the eigenvectors of the associated Jacobian in~\eqref{eq:jacobian} are examined for each system in a lossless and lossy setting. 
As a fixed point becomes unstable one or multiple eigenvalues pass the imaginary axis. 
The corresponding eigenvectors show which kind of instability is present and which corollary to compare to.
\vspace*{-2.5mm}
\subsection{Lossless case}\vspace*{-2.5mm}
In the lossless case ($G_{j,\ell}=0$) the equations of motion are considerably simplified.
While different fixed points, stable and unstable, can be found, we focus on fixed points that are stable by using the aforementioned iterative procedure.
In the two-machine system the active power injected by one machine is given by $P_1^m$ and the power extracted by the other machine is $P^{m}_2=-P^{m}_1$.
In the six-machine system, the default power injection presented in Fig.~\ref{fig:setup_test_systems} was multiplied by the factor $P_f$, thus proportionally increasing the power extracted or injected at each node.

\begin{figure}[t]
    \includegraphics[width=\columnwidth]{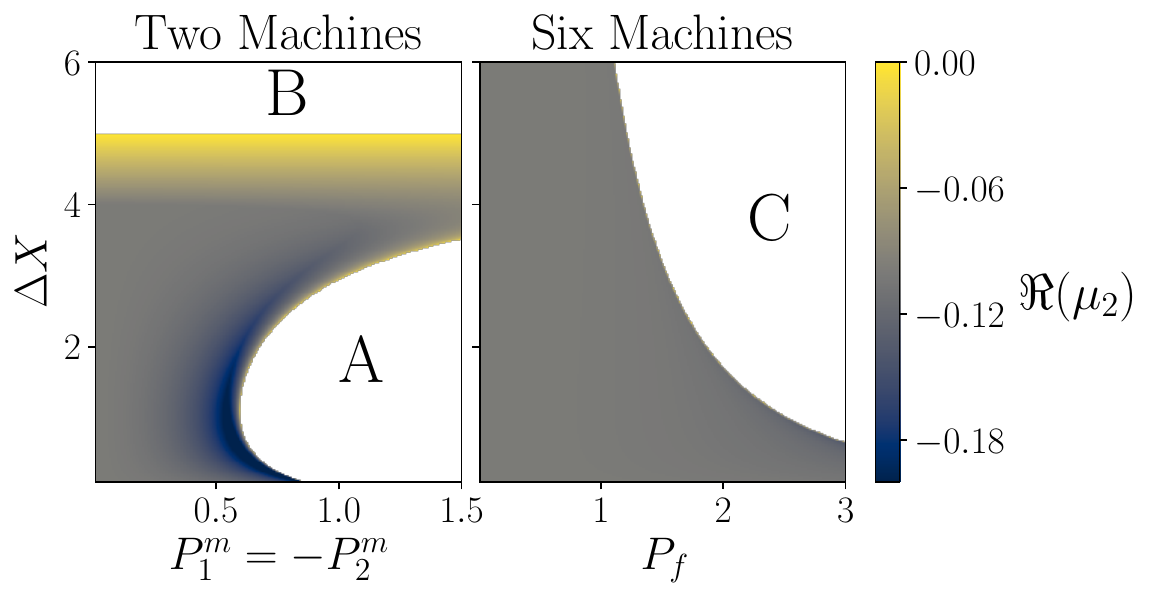}\vspace*{-2.5mm}
    \caption{Stability maps for the lossless test systems. 
    Shown is the real part of the dominant non-zero eigenvalue of the Jacobian $\mu_2$, which determines the stability of the fixed point for the two machine system (left) and the six machine system (right) as a function of levels of power injection $P_1=-P_2$ or $P_f$ for the two machine or six machine system, respectively, and the difference in reactance $\Delta X$.
    White areas depict parameter combinations where either no stable fixed point could be found or where only a nonphysical fixed point (e.g., with negative voltages) could by found.
    A and C indicate regions that are reached via mixed instability for the two and six machine system, respectively. 
    B indicates a region reached by a pure voltage instability for the two machine system.}\label{fig:no_losses_map}\vspace*{-2.5mm}
\end{figure}

In Fig.~\ref{fig:no_losses_map} we exhibit the stability map for varying $\Delta X$ and $P_1$ or $P_f$.
Here, the real part of the dominant eigenvalue is shown by the filled contours. 
White regions indicate parameters where either no stable fixed point existed or were only non-physically meaningful fixed points (e.g., with negative voltages) could be found.
Before examining the usefulness of the corollaries, we identified which type of instability corresponds to which border of the regions indicated by the letters in Fig.~\ref{fig:no_losses_map}.
We evaluated the eigenvectors corresponding to the leading eigenvalue to identify whether a pure or mixed instability was observed. 
For the two-machine system, we can observe a mixed instability at the border to region A, with the notable exception of the line with $\Delta X=0$ discussed below, and a pure voltage instability for region B.
In contrast, for the six-machine system only a mixed instability could be observed at the border to region C in Fig.~\ref{fig:no_losses_map}.
\vspace*{-5.5mm}
\subsubsection{Pure instability}\label{sec:no_losses_pure}\vspace*{-2.5mm}
The pure voltage instability that arose in the two-machine system by crossing the border into region B in\\

\noindent Fig.~\ref{fig:no_losses_map} was examined by setting $P^{m}_1=-P^{m}_2=0.5$ and varying the difference in reactance $\Delta X$ for the two-machine system.
At $\Delta X=5$ the voltages at each machine diverge and no physically meaningful stable fixed point could be found for larger $\Delta X$ (cf. Fig.~\ref{fig:no_losses_voltage}).
Corollaries \ref{cor:1} and \ref{cor:2} predict this point correctly as the left-hand and right-hand side of each corollary were equal for $\Delta X=5$.

\begin{figure}[t]
    \includegraphics[width = \columnwidth]{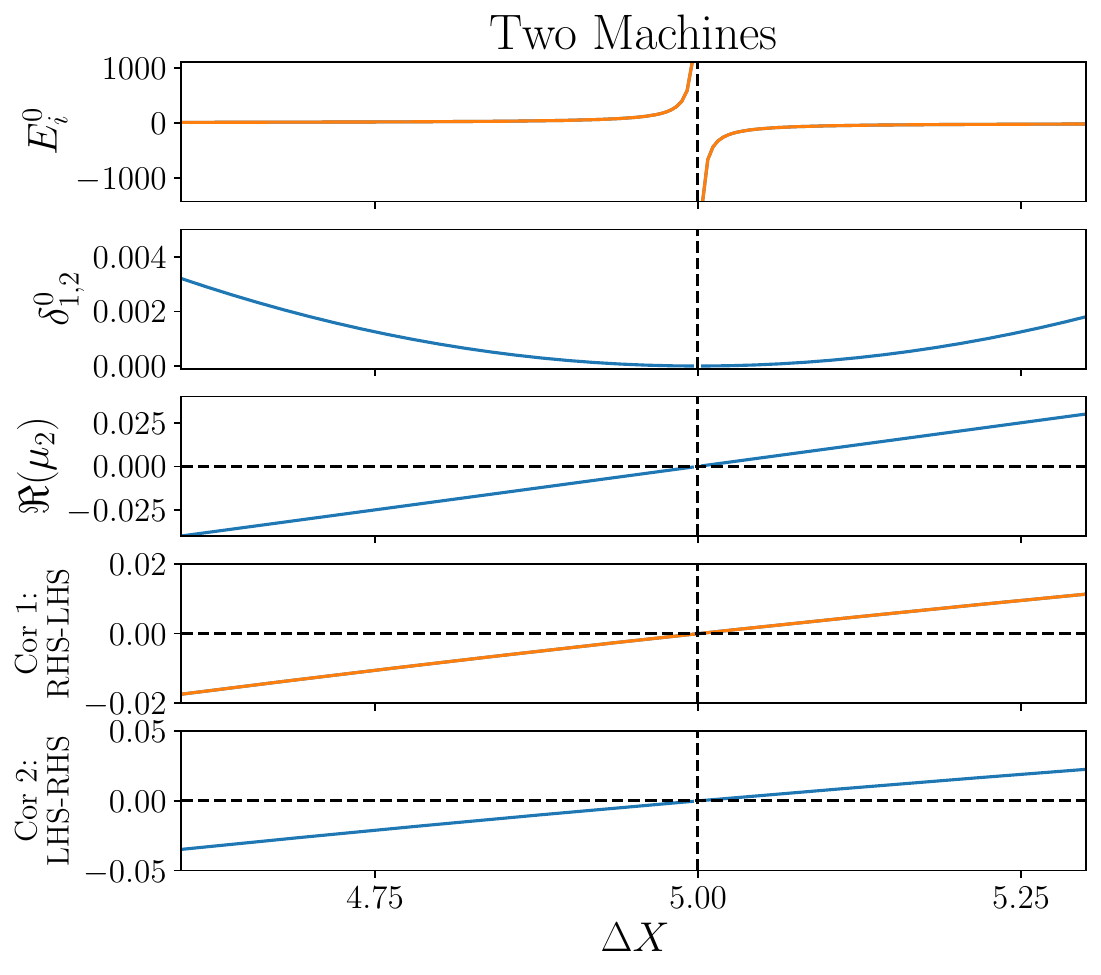}\vspace*{-2.5mm}
    \caption{Route to a pure voltage instability for the two-machine system. 
    Power injection was set to $P^{m}_1=-P^{m}_2=0.5$ and the reactance $\Delta X$ was varied.
    From top to bottom: Stationary voltages $E_i^\circ$, stationary angle differences $\delta_{i,1}$, real part of the dominant non-zero eigenvalue $\mu_2$, difference of left hand sided and right hand side of Corollary~\ref{cor:1} and Corollary~\ref{cor:2}. 
    The stable fixed pint is lost at $\Delta X = 5$, which is perfectly predicted by Corollaries \ref{cor:1} and \ref{cor:2}.}\label{fig:no_losses_voltage}\vspace*{-2.5mm}
\end{figure}

\begin{figure}[t]
    \includegraphics[width = \columnwidth]{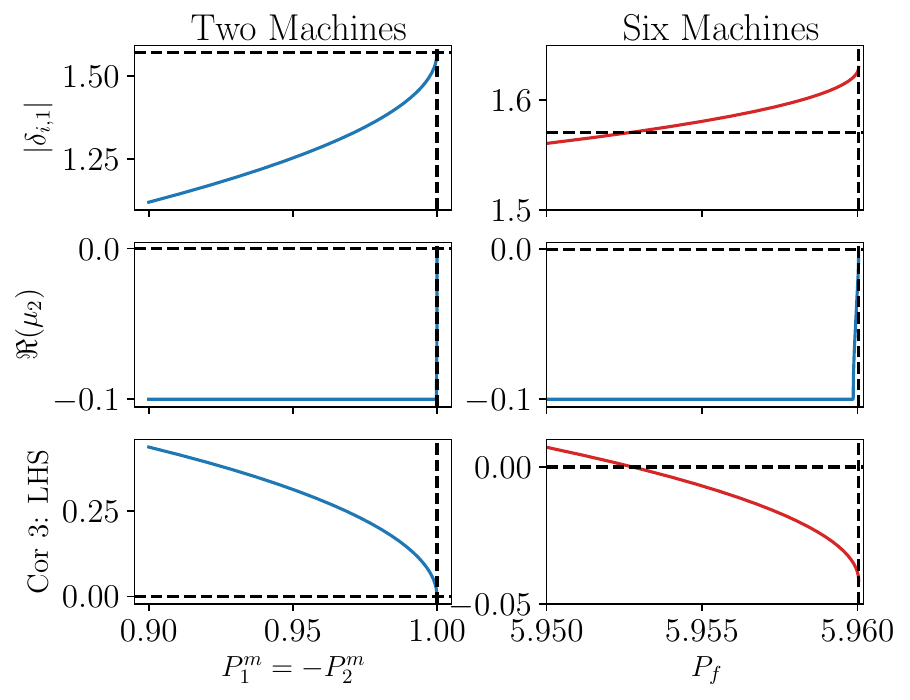}\vspace*{-2.5mm}
    \caption{Route to a pure angle instability with a difference in transient and static reactance $\Delta X = 0$ for the two machine (left column) and six machine system (right column).
    The rows show from top to bottom: Stationary phase angle difference $|\delta_{i,1}|$, real part of the dominant eigenvalue $\mu_2$ and left hand side of Corollary~\ref{cor:3}.
    The stable fixed pint is lost at $P^{m}_1=-P^{m}_2=1$ and $P_f=5.96$ for the two- and six-machine system, respectively. 
    These bifurcation points are almost perfectly predicted by Corollary~\ref{cor:3}.}\label{fig:no_losses_angle}\vspace*{-2.5mm}
\end{figure}

A pure rotor-angle instability could be observed for both systems by setting the difference of static and transient reactance to $\Delta X=0$, thus isolating the rotor-angle subsystem. 
The instability arose after increasing the level of power injection/extraction beyond $P^{m}_1=-P^{m}_2=1$ and $P_f\approx 5.96$ for the two machine and the six machine system, respectively (see Fig.~\ref{fig:no_losses_angle}).
Corollary~\ref{cor:3} predicted the point where the maximal phase angle difference $\delta_{i,j}$ of machines connected by a line was equal to $\pi/2$ for the two and six-machine system.
This coincided with the point where the stability of fixed point is lost for the two machine system, while it was slightly below the transition for the six machine system.
We conclude that the bound in Corollary~\ref{cor:3} is tight in simple systems and remains near the transition point for more complex systems.

\subsubsection{Mixed Instability}\label{sec:no_losses_mixed}\vspace*{-2.5mm}
To evaluate the usefulness of the corollaries for the mixed instability we study the left-hand and right-hand sides of Corollaries \ref{cor:4} to \ref{cor:7} for a range of values of power injection/extraction (i.e., $P^{m}_1 = -P^{m}_2$ and $P_f$) and differences in reactance $\Delta X$.
The results can be seen in Fig.~\ref{fig:no_losses_cor_map_mix}.
These plots show the difference of the left-hand side (LHS) and the right-hand side (RHS) of Corollaries \ref{cor:4} to \ref{cor:7} as a function of power levels and difference in reactances.
The fixed point is stable for both systems where LHS$>$RHS.
Corollaries \ref{cor:4}--\ref{cor:7} provide sufficient criteria for determining the linear stability of the considered fixed point.
Hence the difference LHS-LHS$>0$ ensures stability.
In contrast, if LHS$-$RHS$<0$ the corollaries yield no definitive answer.
We find that in the white region in Fig.~\ref{fig:no_losses_cor_map_mix} LHS$-$RHS$<0$ and no stable fixed point could be found, while in the gray hatched region we have LHS$-$RHS$<0$ and there is a stable fixed point.
Corollary~\ref{cor:4} seems to be tight in both cases tested numerically as the stability boundary coincides with the limiting case LHS$=$RHS.
Corollaries \ref{cor:5} and \ref{cor:6} are tight only for the two machine system. 
Nevertheless, they adequately reproduce the qualitative shape of the stability boundary for the six machine system.

\begin{figure}[t]
    \includegraphics[width = \columnwidth]{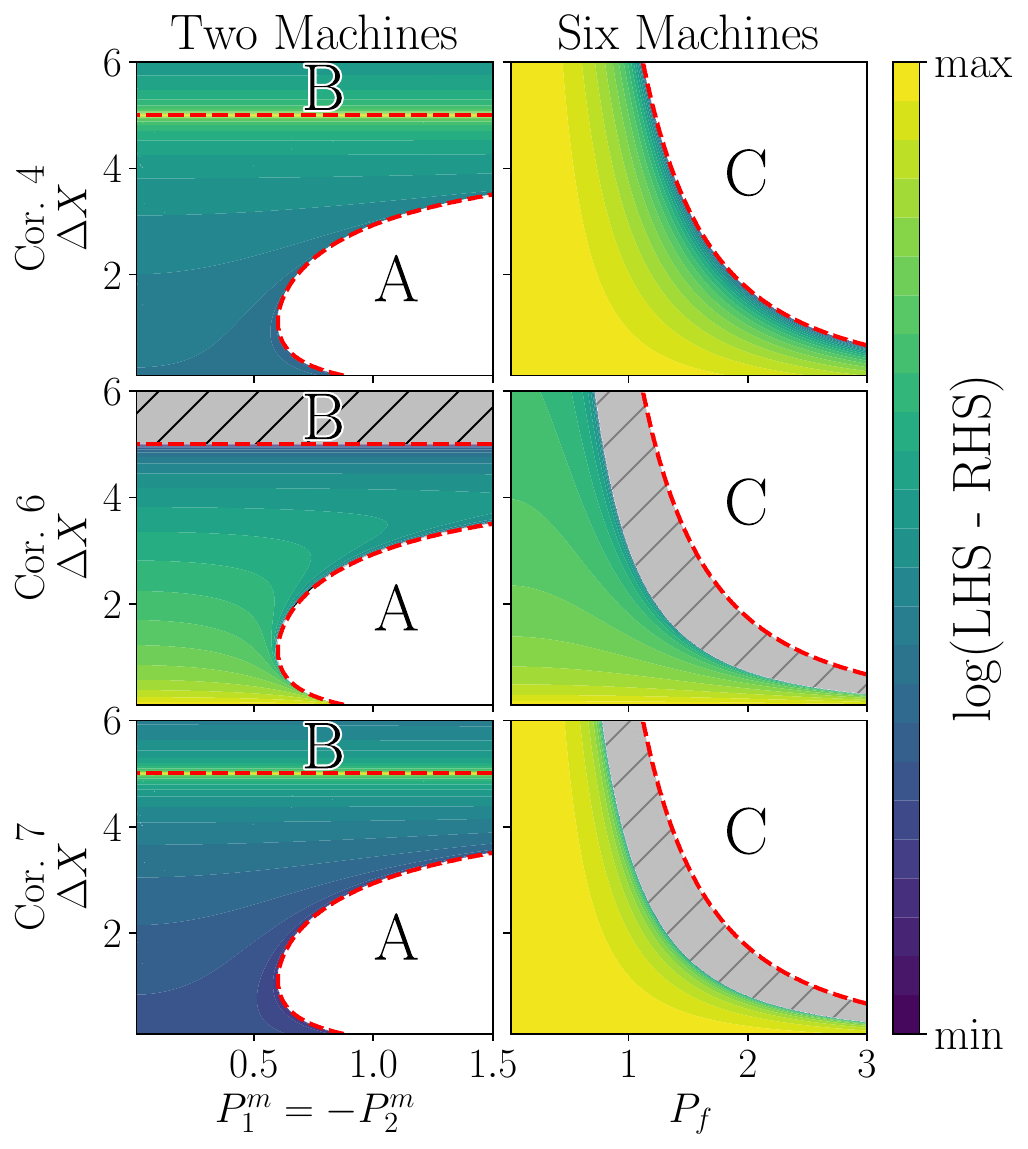}\vspace*{-2.5mm}
    \caption{Comparison of the numerically determined stability boundaries and the sufficient criteria given by Corollaries \ref{cor:4}, \ref{cor:6}, and \ref{cor:7}.
    The red dashed lines show the boundaries of the parameter regions for which a stable fixed point exists according to the dominant eigenvalue of the Jacobian $\mu_2$ in both the two- and six-machine system (cf. Fig.~\ref{fig:setup_test_systems}).
    The colourmap shows the logarithm of left-hand side (LHS) minus right-hand side (RHS) of the corollaries corresponding to mixed instabilities (i.e., Corollaries \ref{cor:4}, \ref{cor:6}, and \ref{cor:7} for two machines (left column) and six machines (right column)) as a function of power injection/extraction and difference in reactance. 
    Note that only values are shown were the difference is positive and thus the logarithm gives a real value.
    In the white region, no stable fixed points exist.
    Correspondingly, the sufficient criteria are not satisfied (LHS$-$RHS$<0$).
    In the coloured area, LHS$-$RHS$>0$, such that a stable fixed point exists according to the Corollaries \ref{cor:4}, \ref{cor:6}, and \ref{cor:7}.
    While the sufficient criteria are not satisfied (LHS$-$RHS$<0$) in the gray hatched area, a stable fixed point still exists in this area.
    }\label{fig:no_losses_cor_map_mix}\vspace*{-2.5mm}
\end{figure}

\subsection{Lossy case}\vspace*{-2.5mm}
After having considered lossless systems and showingthat corollaries \ref{cor:1} to \ref{cor:3} are tight, while corollaries \ref{cor:4} to \ref{cor:7} adequately describe the stability boundary (perfectly in the two-machine system), we turn to the more interesting case where losses are included.
Again, the two- and six-machine systems were considered.
Before checking the corollaries and how the results obtained by the perturbations Ansatz~\eqref{eq:perturb} compare to numerical results, we have to find the correct fixed points.
While a balanced system without losses can easily be obtained by choosing the active power at each node to obey $\sum_j P^{m}_j = 0$, this is not the case when considering losses, i.e., $G_{j,l} \neq 0 \, \forall j,\ell$.
The losses $P_{\mathrm{L}}$ have to be covered so that the $\sum_j P^{m}_j = P_{\mathrm{L
}}$.
Finding a fixed point determines the stationary power flows and thus the losses. 
To have a balanced system, one needs to choose a method that ensures that the overall power balance is obeyed.
Commonly, in conventional power flow studies, losses are compensated at a single bus by changing the power output at the connected machines~\cite{Machowski2020}.
To keep the characteristics of a synchronous machine for each node, we compensate the power transmission losses by modifying the power injection of the slack node.
Thus, the power conservation law~\eqref{eq:3rd-fixed} is not obeyed at the slack node, leading to a solution where the missing active power $P_{\mathrm{S}}$ is calculated and added to the power output at the slack node.
We chose the first generator bus to provide this additional power.
This is an arbitrary choice that deserves more attention in realistic simulations but is appropriate for the analysis at hand.
Its active power output is thus increased and set to $P^m_1 = P_{1}^{n} + P_{\mathrm{S}}$ with the nominal power output $P_1^n = - P^{m}_2$.
As described in Sec.~\ref{sec:II}, the real elements of the nodal admittance matrix $G_{\ell, j}$ are generally negative.
We chose to introduce different levels of losses by introducing the loss factor $l_f$ and setting $G_{\ell, j} = - l_f \cdot B_{\ell, j}$.
Since $G_{\ell,j}$ and $B_{\ell, j}$ have a similar magnitude for distribution grid, while $G_{\ell,j}$ is negligible for transmission grids, we chose to set $l_f \in [0, 0.6]$.
A parameter scan over a range for the active power injection/extraction and difference in reactance for a loss factor of $l_f=0.3$ was performed.
The resulting stability map is shown in Fig.~\ref{fig:losses_map}.
While the general shape of the parameter region with a stable fixed is the same, the border to region A moved to smaller levels of power for both systems.

\begin{figure}
    \centering
    \includegraphics[width=\columnwidth]{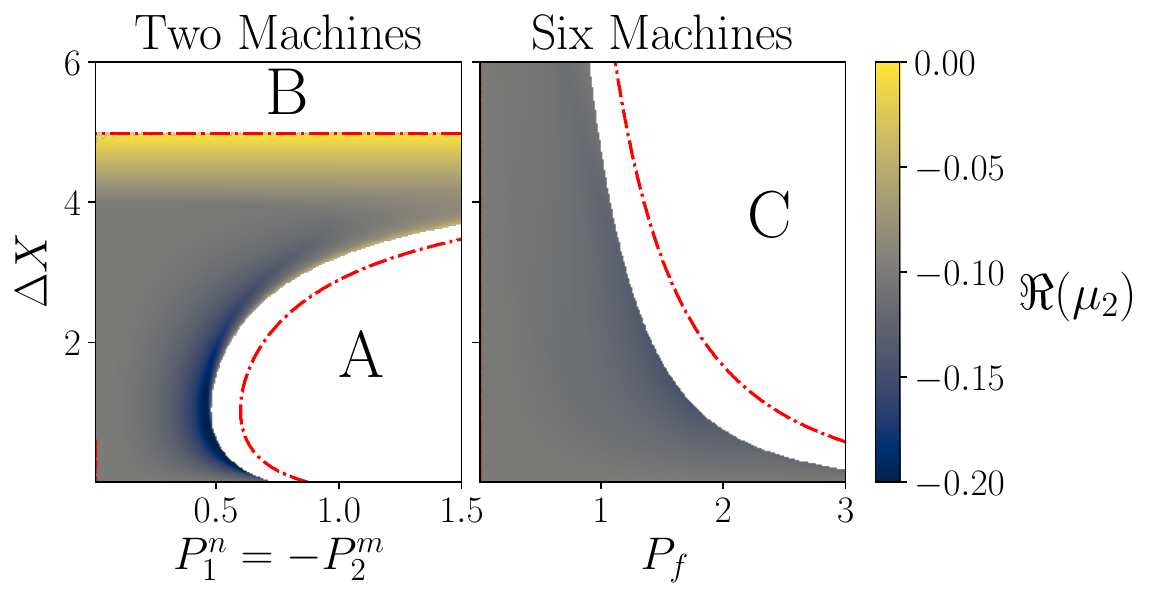}\vspace*{-2.5mm}
    \caption{Parameter region for a stable fixed point decreases slightly by shifting the mixed instability to lower levels of power injection/extraction.
    Shown are the real part of the dominant eigenvalue of the Jacobian as a function of different levels of power injection/extraction and differences in reactance for the two machine and six machine system for a loss factor of $l_f=0.3$.}\label{fig:losses_map}\vspace*{-2.5mm}
\end{figure}

If losses were considered, only mixed instability could be observed.
Therefore, the corollaries corresponding to mixed instabilities were evaluated and are shown in Fig.~\ref{fig:losses_cor_map}.
We find that the border to region A changed to lower values of power injection/extraction as indicated by the black arrows.
The filled areas show that the region where Corollaries~\ref{cor:4} to \ref{cor:7} are satisfied and do not fully cover the parameter region with a linearly stable fixed point as the loss factor $l_f$ increased.
Additionally, the approximation $\Delta X \approx 0$ used to find the Fiedler vector in Corollaries~\ref{cor:4} and \ref{cor:7} limits the range of $\Delta X$ where the corollaries are insightful in the sense of overlapping with the area of a stable fixed point given by the dominant eigenvalues.
Overall, the sufficient stability conditions remain tight for all values of $l_f$ although they were derived solely for leading order in the losses.
That is, at every point in parameter space where Corollaries~\ref{cor:4}, \ref{cor:6}, and \ref{cor:7} imply the fixed point is stable agrees with the dominant non-zero eigenvalue of the corresponding Jacobian being smaller than zero.

\begin{figure}
    \includegraphics[width=1\columnwidth]{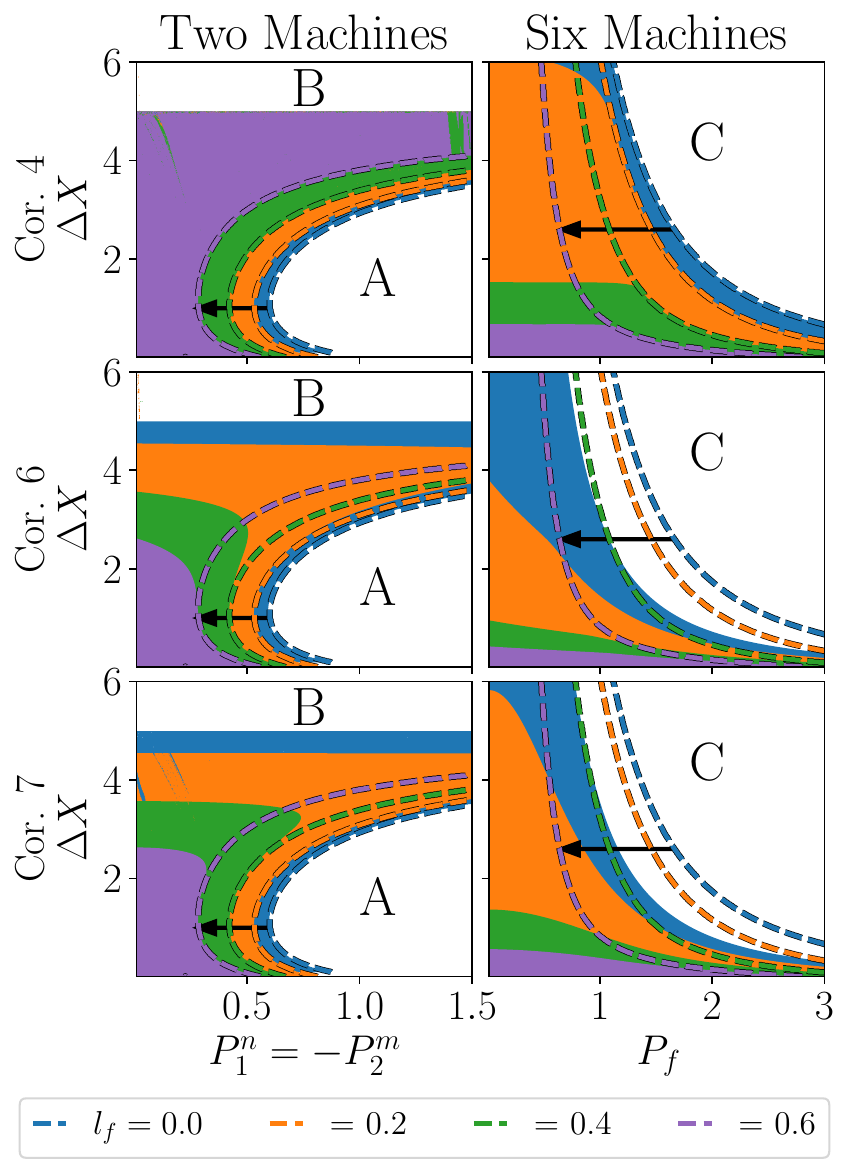}\vspace*{-2.5mm}
    \caption{Corollaries for mixed instabilities for different levels of losses in the two machine (left column) and six machine system (right column). 
    Different colours indicate different loss factors $l_f$ according to the legend below the plot. 
    Dashed lines show where the dominant non-zero eigenvalue $\mu_2$ crossed the imaginary axis. 
    The contour showing the border to region B at $\Delta X \approx 5$ in the two machine system was removed. 
    Given the hardship of following the correct fixed point close to the transitions, not all contours can be drawn with absolute accuracy.
    Nevertheless, the white regions give the parameter region where no physically meaningful stable fixed point exists.
    Region B does not change for different values of $l_f$ as already seen for $l_f=0.3$ in Fig.~\ref{fig:losses_map}.
    The borders to region A move to lower levels of power injection/extraction for higher $l_f$ as highlighted by the black arrows.
    Coloured areas show where corollaries indicate that the obtained fixed point is stable.}
    \label{fig:losses_cor_map}\vspace*{-2.5mm}
\end{figure}

Therefore, the developed Corollaries \ref{cor:1}--\ref{cor:7} can be used to efficiently judge whether a system's fixed point is stable without the need to calculate the full Jacobian or run simulations.
This is especially useful to system operator that need to check the stability of different grid situations, since they can use the corollaries to focus on the cases where the corollaries do not indicate a stable fixed point, potentially cutting down on the amount of costly (numerical) simulations of the full dynamics.

\section{Conclusion}\label{sec:VII}\vspace*{-2.5mm}
The third-order model describes the dynamics of synchronous machines and takes into account both the rotor-angle and the voltage dynamics.
Analytical results for the dynamics and the stability of coupled machines in power grids with complex topologies are rare, in particular if Ohmic losses are taken into account.
In this article, a comprehensive linear stability analysis was carried out and several explicit stability criteria were derived.

The first main result of this works depicts the influence of resistive terms of the system after linear stability analysis.
Remarkably, these terms enter into the reduced system Jacobian only via the two diagonal matrices $\vec \Gamma^d$ and $\vec N$, as shown in~\eqref{eq:def-Xi-lossy} up to leading order in the losses.
As a second main result, a decomposition of the Jacobian into the rotor-angle and the voltage subsystems is derived in Lemma~\ref{lemma:Schur-lossy}, where losses are incorporated up to linear order via perturbation theory.
This decomposition reveals clearly how the interplay of both subsystems can lead to mixed forms of instability and thus requires additional security margins. 

Based on this decomposition, several explicit stability conditions are uncovered, both for the isolated subsystems as well as for the full systems, including rotor-angle and voltage dynamics.
In particular, one can show that voltage stability is not affected directly by resistive terms up to leading order in the losses, thus implying that studies on voltage stability can be withstood in the purely lossless case.
Furthermore, Corollaries~\ref{cor:4} and \ref{cor:5} entail a strict minimum connectivity of the power-grid network solely by the presence of resistive terms, i.e., a lower bound to possible dynamics on the system given the presence of losses in the system.

The analytical insights unveiled here -- and in particular the mathematical evaluation of lossy systems -- can prove relevant to further understand power grids of all spatial scales and of general graph constructions.
By mathematically tackling the presence of losses in the system the applicability of the results is now extended from transmission grids with negligible losses to grids where loses play a bigger role.
Moreover, it opens the door to further research on higher-order models from a mathematical point-of-view, and can henceforth be applied more generally to other power-grid models.

\begin{acknowledgments}\vspace*{-2.5mm}
We thank Christopher Kaschny for helpful discussions.
We gratefully acknowledge support from the German Federal Ministry of Education and Research (grant no.~03EK3055B), the German Federal Ministry for Economic Affairs and Energy (BMWi) via the project DYNAMOS (grant no.~03ET4027A) and the Helmholtz Association via the grant \textit{Uncertainty Quantification -- From Data to Reliable Knowledge (UQ)} (grant no.~ZT-I-0029).
\end{acknowledgments}

\bibstyle{apsrev4-2}
\bibliography{bib}

\end{document}